\documentclass[aps,pra,reprint,floatfix,unsortedaddress,superscriptaddress]{revtex4-2}

\usepackage{times}
\usepackage{amsmath,amsfonts,amssymb,amsthm}
\usepackage{braket}
\usepackage{graphicx}
\usepackage[psdextra]{hyperref}
\hypersetup{
    colorlinks=true,
    linkcolor=red,
    anchorcolor=blue,
    citecolor=blue,
    urlcolor=red
}

\newtheorem{theorem}{Theorem}[section]
\newtheorem{lemma}[theorem]{Lemma}
\newtheorem{corollary}[theorem]{Corollary}
\newtheorem{proposition}[theorem]{Proposition}

\newcommand{\casql}{Laboratory of Quantum Information, University of Science and Technology of China, Hefei, Anhui, 230026, China}
\newcommand{\aihf}{Institute of Artificial Intelligence, Hefei Comprehensive National Science Center, Hefei, Anhui, 230088, China}
\newcommand{\origin}{Origin Quantum Computing Technology (Hefei) Co., Ltd., Hefei, Anhui, 230026, China}

\newcommand{\norm}[1]{\left\lVert#1\right\rVert}
\newcommand{\abs}[1]{\left\lvert#1\right\rvert}
\newcommand{\Ocal}{\mathcal{O}}

\begin{document}

\title{Poisson-Compiled Quantum Singular Value Transformation for Power-Exponential Dissipation}

\author{Chao Wang}
\email{wangchao2@originqc.com}
\affiliation{\origin}
\author{Xi-Ning Zhuang}
\affiliation{\origin}
\affiliation{\casql}
\author{Menghan Dou}
\affiliation{\origin}
\author{Zhao-Yun Chen}
\email{chenzhaoyun@iai.ustc.edu.cn}
\affiliation{\aihf}
\author{Guo-Ping Guo}
\affiliation{\origin}
\affiliation{\aihf}
\affiliation{\casql}

\date{\today}

\begin{abstract}
We study quantum implementations of the contraction
$\exp(-T H^\alpha)$ for $H=H^\dagger\succeq0$ and $\alpha>0$.
Poisson summation generates an exact target--alias--tail decomposition,
whose Fourier samples are compiled classically into one Chebyshev
polynomial.  The quantum circuit therefore uses a polynomial eigenvalue
transformation rather than a frequency linear combination of unitaries.
We analyze two access models: block encodings of
$H/\norm{H}$ and of the shifted signal $2H/\norm{H}-I$.  In ordinary
single-sequence quantum singular value transformation (QSVT), parity
forces the first model to use an even
extension, which is entire only for even positive integers.  An exact
quadratic lift in the shifted model makes every positive integer entire
and changes the fixed-scale approximation error for noninteger powers
from order $d^{-\alpha}$ to $d^{-2\alpha}$ relative to that direct even
construction.
The resulting degree bounds are tight, within the stated access and
parity classes, in the separately stated large-scale, fixed-error and
fixed-scale, high-precision limits.
Preparing a normalized output introduces the input-to-output norm ratio
$u_r$ through both error allocation and amplitude amplification.
Nearest-neighbor Laplacians provide a unit-normalized realization of the
shifted signal.  For time-independent non-Hermitian dynamics, the
Fourier--Weyl representation of the entire surrogate used by optimal
linear combination of Hamiltonian simulation (LCHS) admits an exact
Weyl--Poisson alias identity and remains compatible with sinh--sinh
quadrature without a commutativity assumption.  This reformulation does
not improve the optimal LCHS query order.  In amplitude--phase
separation (APS), the dissipative semigroup law allows the same
polynomial construction to realize a uniformly controlled family of
contractions.  Square-root access recovers the usual square-root degree,
while higher root access gives a conditional beyond-square-root
extension.
\end{abstract}

\maketitle

\section{Introduction}
\label{sec:introduction}

Quantum circuits directly implement unitary dynamics, whereas heat
flow, higher-order and fractional diffusion, and effective
non-Hermitian dynamics are generated by contractions
~\cite{Linden2022,PhysRevA.108.032603,METZLER20001,LISCHKE2020109009}.
The techniques developed for Hamiltonian simulation provide much of
the required unitary machinery
~\cite{doi:10.1126/science.273.5278.1073,RevModPhys.86.153,
doi:10.1073/pnas.1801723115,Altman2021qsim,daley2022practical}, but a
dissipative propagator also requires a non-unitary approximation and
probabilistic preparation of the normalized output.

For $T>0$, we consider the homogeneous, time-independent problem
\begin{equation}
    u_T=\mathrm e^{-T H^\alpha}u_0,
    \qquad
    H=H^\dagger\succeq0,
    \qquad
    \alpha>0.
    \label{eq:intro_problem}
\end{equation}
Two access models are kept separate: a block encoding of
$H/\norm{H}$ and a unit-normalized block encoding of
$2H/\norm{H}-I$.  The latter is an additional structural assumption,
not a consequence of a generic block encoding.  In a sparse-matrix
implementation the two oracles can nevertheless share the row-location
data; the shifted entries require only a factor-of-two rescaling and a
diagonal subtraction~\cite{Berry2007,Sunderhauf2024blockencoding}.
The periodic-grid Laplacian in
Sec.~\ref{subsec:pde} also preserves unit normalization.
Neither access model requires a fractional power of $H$.  We write the
state-preparation overhead as
\begin{equation}
    u_r:=
    \frac{\norm{u_0}}{\norm{u_T}}.
    \label{eq:intro_output_ratio}
\end{equation}
The same notation is used below when $u_T$ is the state after a more
general time-independent dissipative evolution.

Quantum signal processing (QSP) and quantum singular value transformation
(QSVT) implement polynomial transformations of block-encoded
operators and underlie nearly optimal Hamiltonian simulation and
linear-system
algorithms~\cite{PhysRevLett.118.010501,Gilyen2019qsvt,
PhysRevLett.114.090502,
PhysRevLett.103.150502,doi:10.1137/16M1087072,
Low2026quantumlinearsystem}.
A direct Fourier representation of a matrix function can instead be
implemented by a linear combination of unitaries
(LCU)~\cite{Childs2012lcu}.  Its sampled frequencies become coherent
circuit branches, and the coefficient one-norm enters the success
amplitude.  In the construction below, Fourier analysis is used only
to generate a polynomial classically; one polynomial-transform circuit
implements the resulting transformation.

Related structured constructions use a Gaussian linear combination of
Hamiltonian simulation (LCHS) or the Kannai transmutation formula to
obtain square-root dissipative dependence under factorized or matched
access assumptions~\cite{Kannai01011977,
jin2026transmutationbasedquantumsimulation,
kharazi2026sublineartimequantumalgorithmhighdimensional}.  Here the
question is how the power $\alpha$ and the encoded signal jointly set
the polynomial degree, and how the resulting transformation can be
used in time-independent non-unitary simulation.

The mathematical input is the Poisson summation
formula~\cite{Miller2004,stein1971introduction,
katznelson2004introduction}.  Applied to
$\mathrm e^{-T\norm{H}^{\alpha}\abs{x}^{\alpha}}$, it expresses uniform
samples of the inverse Fourier transform as the target plus shifted
aliases.  Kernel symmetry removes the sine terms, and the
Jacobi--Anger expansion converts the remaining cosines into Chebyshev
coefficients.  These coefficients are summed before the circuit is
built, so the Poisson sample count belongs to classical preprocessing.
The exact target--alias--tail identity controls coefficient generation;
the quantum operation itself is standard QSVT and remains subject to
polynomial-approximation lower bounds.

Under the standard block encoding of $H/\norm{H}$, the direct even QSVT
polynomial approximates
$\mathrm e^{-T\norm{H}^{\alpha}\abs{x}^{\alpha}}$ on $[-1,1]$.  It
is entire when $\alpha$ is an even positive integer; odd
integer and noninteger powers retain a singularity at the interior
point $x=0$.  Under the shifted signal, the scalar target becomes
$\mathrm e^{-T\norm{H}^{\alpha}[(1+x)/2]^\alpha}$.  The exact change
of variables $x=2y^2-1$ identifies its degree-$d$ approximation with
an even degree-$2d$ approximation to
$\mathrm e^{-T\norm{H}^{\alpha}\abs{y}^{2\alpha}}$.  All positive
integers then enter the entire-function regime, while a
noninteger endpoint singularity has twice the approximation exponent
of the corresponding interior singularity.  This quadratic lifting is
the source of both the improved physical scale and the improved
precision dependence proved below.

A broader Poisson-summation framework for quantum matrix
transformations was introduced previously~\cite{wang2026unified}.  Here
we focus on power-exponential dissipation and pursue a different
implementation: the contour-based route is removed, the coherent
frequency LCU is compiled into a polynomial transformation, and direct and
shifted signal access are treated separately.  This leads to the
access-dependent approximation regimes and end-to-end query bounds
established below.

The polynomial construction applies directly to heat, biharmonic, and
fractional dissipation.  The two non-Hermitian applications use distinct
structures.  For time-independent non-Hermitian simulation, we combine
Poisson summation with Weyl calculus
~\cite{ni2026quantumeigenvaluetransformationlinear,
weyl1950theory,anderson1969weyl,jefferies2004spectral}, the optimal LCHS
kernel~\cite{PhysRevLett.131.150603,
an2023quantumalgorithmlinearnonunitary,
low2025optimalquantumsimulationlinear}, and existing sinh--sinh
quadrature~\cite{aftab2026lchsmpf}.  The entire scalar surrogate has a
Fourier representation that Weyl calculus lifts to the noncommuting
Hermitian pair.  Strip analyticity and Gaussian decay of its Fourier
kernel control contour displacement and alias decay, whereas the exact
Poisson identity follows from the Fourier representation; it is not a
substitution rule for an ordinary holomorphic functional calculus of
$L+\mathrm iG$.  The resulting alias
identity does not alter the optimal LCHS query order or LCU
normalization.  The sinh--sinh map can reduce a precision-dominated node
count, but the uniform count retains a linear $T\beta_A$ factor.
Amplitude--phase separation (APS) instead exploits the semigroup law for
$\mathrm e^{-tH}$.  Its dissipative Dyson products divide the total time
into nonnegative intervals, so one controlled semigroup family supplies
all dissipative factors.  For $\exp[-T(H+\mathrm iG)]$, we use half-root
access to
$(H/\beta_H)^{\frac{1}{2\alpha}}$ for fixed positive integer $\alpha$
~\cite{hu2026quantumsimulationnonunitarydynamics}.  A degree-two QSVT
sequence also constructs the associated root-shift signal
$2(H/\beta_H)^{\frac{1}{\alpha}}-I$ exactly
~\cite{Gilyen2019qsvt}.  For APS we give the QSVT degree required for
one controlled dissipative-family call, the number of such calls made by
the interaction-picture algorithm, and the primitive half-root query
count obtained by implementing that family with its QSVT circuit.
Both non-Hermitian applications are
stated under their additional oracle assumptions.

\section{Poisson-Compiled QSVT for Power-Exponential Functions}
\label{sec:poisson_qsvt}

Here and in the appendices, $C$ denotes a positive
universal constant and $C_\alpha,c_\alpha$ denote positive constants
depending only on $\alpha$.  Their values may change from line to line.
Subscripts on asymptotic symbols are omitted.  Their hidden
constants may depend on parameters explicitly held fixed in the
corresponding limit, as stated in the surrounding text.
We omit additive constant query and sample costs and understand each
resource count to be at least one.  If the target
differs from the identity by no more than the requested error, no
transformation is required.  This convention avoids adding $1+$ to
every complexity formula.
Logarithmic factors are written in the high-precision convention
$1/\epsilon\gg1$, or $u_r/\epsilon\gg1$ for state preparation.  A
logarithm or iterated logarithm appearing in a denominator is
understood to be lower-bounded by one; bounded lower-precision regimes
are absorbed into the implicit constants.

\subsection{Problem formulation and access models}
\label{subsec:problem}

Let $H\ne0$ be Hermitian and positive semidefinite.  We assume access
to a unitary $U_H$ whose designated block is exactly, in the standard
block-encoding sense~\cite{Gilyen2019qsvt},
\begin{equation}
    (\bra{0^a}\otimes I)U_H(\ket{0^a}\otimes I)
    =
    \frac{H}{\norm{H}}.
    \label{eq:block_encoding}
\end{equation}
The second access model supplies a unitary $U_S$ whose designated
block is
\begin{equation}
    (\bra{0^b}\otimes I)U_S(\ket{0^b}\otimes I)
    =
    S,
    \qquad
    S:=\frac{2H}{\norm{H}}-I.
    \label{eq:shifted_block_encoding}
\end{equation}
Eq.~\eqref{eq:shifted_block_encoding} is a unit-normalized block
encoding and is an independent access assumption.  A generic block
encoding in Eq.~\eqref{eq:block_encoding} does not automatically give
Eq.~\eqref{eq:shifted_block_encoding} with unit normalization: an
LCU construction of $2H/\norm{H}-I$ would have coefficient one-norm
three.  This mathematical distinction need not imply a substantial
implementation difference in the sparse-matrix model.  If $H$ is
$s$-sparse and its entries and nonzero locations are computed by the
usual sparse-access oracles, the same location oracle and an entry
oracle that rescales by two and subtracts the diagonal give $S$ with
sparsity at most $s+1$ and constant query
overhead~\cite{Berry2007,Gilyen2019qsvt}.  Unit
normalization must still be verified for the particular sparse
encoding.  It holds for the finite-difference Laplacians in
Sec.~\ref{subsec:pde}, where each row of the shifted signal has
absolute sum one.

For every $\alpha>0$, $H^\alpha$ is defined by the spectral calculus;
zero eigenvalues cause no singularity in the target exponential.
The zero matrix is trivial and will be excluded.  Since
$0\preceq H/\norm{H}\preceq I$, the scalar transformations in the two
access models are, respectively,
\begin{equation}
\begin{aligned}
    x&\longmapsto
    \mathrm e^{-T\norm{H}^{\alpha}x^\alpha},
    &&0\leq x\leq1,\\
    s&\longmapsto
    \mathrm e^{-T\norm{H}^{\alpha}[(1+s)/2]^\alpha},
    &&-1\leq s\leq1.
\end{aligned}
    \label{eq:normalized_target}
\end{equation}
If only a block encoding with a known normalization larger than
$\norm{H}$ is available, every statement remains valid after replacing
$\norm{H}$ by that normalization in both the oracle and the formulas.
In particular, the shifted block is then $2H/\beta_H-I$ for a known
$\beta_H\geq\norm{H}$.  The present normalization is used to keep the
formulas sharp.
The contraction bounds
$\mathrm e^{-T\norm{H}^{\alpha}}I\preceq
\mathrm e^{-T H^\alpha}\preceq I$ imply
$1\leq u_r\leq\mathrm e^{T\norm{H}^{\alpha}}$ for every nonzero
input.  No spectral gap is assumed.

QSVT requires a polynomial with a parity fixed by its degree.  We use
an even degree and therefore approximate
$\mathrm e^{-T\norm{H}^{\alpha}\abs{x}^{\alpha}}$ on $[-1,1]$.
On the nonnegative spectrum of $H/\norm{H}$ this is exactly
Eq.~\eqref{eq:normalized_target}.  If a real even polynomial $P_d$
obeys
\begin{equation}
    \max_{x\in[-1,1]}
    \abs{
    P_d(x)-
    \mathrm e^{-T\norm{H}^{\alpha}\abs{x}^{\alpha}}
    }
    \leq\epsilon,
    \label{eq:uniform_scalar_error}
\end{equation}
then the spectral theorem gives
\begin{equation}
    \norm{
    P_d\!\left(H/\norm{H}\right)
    -\mathrm e^{-T H^\alpha}
    }
    \leq\epsilon.
    \label{eq:operator_from_scalar}
\end{equation}
This parity choice is forced in the gap-independent, single-sequence
QSVT model.  An odd QSVT polynomial vanishes at the origin, whereas the
target equals one there, and hence cannot achieve error smaller than
one uniformly over positive semidefinite inputs that may have a zero
eigenvalue.  Thus every nontrivial ordinary-QSVT approximation used
under standard access is even.  Indefinite-parity generalized
eigenvalue transformations form a different implementation class.
The QSVT completion theorem states that a real polynomial of parity
$d\bmod 2$ and modulus at most one on $[-1,1]$ is implementable, up to
a constant block normalization, by $d$ uses of the signal
unitary~\cite{Gilyen2019qsvt}.  A polynomial in the shifted variable
generally has mixed parity.  Generalized Hermitian eigenvalue
transformation implements it with the same query degree, but its
worst-case block normalization can grow as $\Ocal(\log(d+2))$
~\cite{sunderhauf2023generalized}.  For the query bounds below we
instead combine the even and odd ordinary-QSVT components with one
additional ancilla.  The fixed factor $1/2$ in
Eq.~\eqref{eq:qsvt_rescaling} keeps both components admissible and gives
a degree-independent constant normalization.  This constant-size parity
handling is not a frequency LCU and does not depend on the number of
Poisson samples.

\subsection{Poisson-generated polynomial construction}
\label{subsec:poisson_polynomial}

We use the Fourier-transform convention
in~\cite{stein1971introduction,katznelson2004introduction},
\begin{equation}
    \widehat f(\xi)
    =
    \int_{\mathbb R}f(y)\,
    \mathrm e^{-2\pi\mathrm i y\xi}\,\mathrm dy,
    \qquad
    f(y)
    =
    \int_{\mathbb R}\widehat f(\xi)\,
    \mathrm e^{2\pi\mathrm i y\xi}\,\mathrm d\xi.
    \label{eq:fourier_convention}
\end{equation}
Only one kernel is needed:
\begin{equation}
    f_{\alpha,T}(y)
    :=
    \int_{\mathbb R}
    \mathrm e^{-T\norm{H}^{\alpha}\abs{\xi}^{\alpha}}
    \mathrm e^{2\pi\mathrm i y\xi}\,\mathrm d\xi.
    \label{eq:inverse_fourier_kernel}
\end{equation}
It is real and even.  The periodization proof in
Appendix~\ref{app:kernel} shows that both sides below converge
absolutely and locally uniformly.  Poisson summation with sampling
interval $h>0$ therefore gives the exact identity
\begin{equation}
    h\sum_{k\in\mathbb Z}
    f_{\alpha,T}(kh)
    \mathrm e^{-2\pi\mathrm i khx}
    =
    \sum_{n\in\mathbb Z}
    \mathrm e^{-T\norm{H}^{\alpha}
    \abs{x+n/h}^{\alpha}}.
    \label{eq:poisson_exact}
\end{equation}
The factors $2\pi$ in Eqs.~\eqref{eq:fourier_convention} and
\eqref{eq:poisson_exact} fix the Fourier convention; using angular
frequency changes both the oscillatory phase and the alias spacing.

Truncating the sampled kernel at $\abs{k}\leq N$ and isolating the
$n=0$ term yields
\begin{align}
    &h\sum_{k=-N}^{N}
    f_{\alpha,T}(kh)
    \mathrm e^{-2\pi\mathrm i khx}
    -
    \mathrm e^{-T\norm{H}^{\alpha}\abs{x}^{\alpha}}
    \nonumber\\
    &\quad=
    \sum_{n\ne0}
    \mathrm e^{-T\norm{H}^{\alpha}
    \abs{x+n/h}^{\alpha}}
    -
    h\sum_{\abs{k}>N}
    f_{\alpha,T}(kh)
    \mathrm e^{-2\pi\mathrm i khx}.
    \label{eq:poisson_alias_tail}
\end{align}
Eq.~\eqref{eq:poisson_alias_tail} is the starting point of the
error analysis.  Its two terms are the spectral-aliasing error and the
sampled-kernel tail.

Because $f_{\alpha,T}$ is even, the finite sum is real:
\begin{equation}
    h f_{\alpha,T}(0)
    +
    2h\sum_{k=1}^{N}
    f_{\alpha,T}(kh)
    \cos(2\pi khx).
    \label{eq:cosine_poisson_sum}
\end{equation}
The Jacobi--Anger expansion~\cite{NIST:DLMF}
\begin{equation}
    \cos(zx)
    =
    J_0(z)
    +
    2\sum_{j=1}^{\infty}
    (-1)^jJ_{2j}(z)T_{2j}(x)
    \label{eq:jacobi_anger_cosine}
\end{equation}
therefore gives the explicit even polynomial
\begin{align}
    P_d(x)
    &:={}
    h f_{\alpha,T}(0)
    +2h\sum_{k=1}^{N}f_{\alpha,T}(kh)
    \Bigg[
        J_0(2\pi kh)
        \nonumber\\[-1mm]
    &\hspace{18mm}
        +2\sum_{j=1}^{\frac{d}{2}}
        (-1)^jJ_{2j}(2\pi kh)
        T_{2j}(x)
    \Bigg],
    \label{eq:poisson_compiled_polynomial}
\end{align}
where $d$ is even.  All sums over $k$ are evaluated classically and
compiled into one list of QSVT phases.  Eq.~\eqref{eq:poisson_compiled_polynomial} is not a coherent LCU of the
$2N+1$ Fourier modes.

The polynomial in Eq.~\eqref{eq:poisson_compiled_polynomial} is the
degree-$d$ Chebyshev projection of
Eq.~\eqref{eq:cosine_poisson_sum}.  Since the projection norm is
$\Ocal(\log(d+2))$~\cite{mason2003chebyshev,
trefethen2019approximation}, it is sufficient to make the right-hand side of
Eq.~\eqref{eq:poisson_alias_tail} no larger than
$\epsilon/[C\log(d+2)]$.  The proof in
Appendix~\ref{app:poisson_compilation} gives explicit choices of $h$
and $N$.  This projection argument aggregates cancellations among all
sampled frequencies and avoids charging the QSVT degree for
approximating every cosine separately.

\begin{theorem}[Poisson compilation]
\label{thm:poisson_compilation}
Let $\alpha,T>0$ and $0<\epsilon<1/4$.  There are an even integer $d$,
a spacing $h$, and a cutoff $N$ such that the polynomial $P_d$ in
Eq.~\eqref{eq:poisson_compiled_polynomial} satisfies
\begin{equation}
    \max_{x\in[-1,1]}
    \abs{
    P_d(x)-
    \mathrm e^{-T\norm{H}^{\alpha}\abs{x}^{\alpha}}
    }
    \leq\frac{\epsilon}{2}.
    \label{eq:poisson_polynomial_error}
\end{equation}
After computing the coefficients to the accuracy stated in
Appendix~\ref{app:poisson_compilation}, the fixed rescaling
\begin{equation}
    \frac{P_d(x)}{2(1+\epsilon/2)}
    \label{eq:qsvt_rescaling}
\end{equation}
is QSVT admissible.  If $B$ denotes its designated QSVT block, then
\begin{equation}
    \norm{2B-\mathrm e^{-T H^\alpha}}
    \leq\epsilon.
    \label{eq:qsvt_normalization_two_error}
\end{equation}
Thus $B$ is a normalization-two block encoding of
$\mathrm e^{-T H^\alpha}$ with operator error at most $\epsilon$.
\end{theorem}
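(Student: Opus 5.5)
The plan is to bound the three contributions separately and then pass from the scalar bound to the QSVT block. Write $\lambda:=T\norm{H}^{\alpha}$ and $g(x):=\mathrm e^{-\lambda\abs{x}^{\alpha}}$. Let $S_{h,N}(x)$ be the finite cosine sum in Eq.~\eqref{eq:cosine_poisson_sum}. Since $P_d$ is the degree-$d$ Chebyshev projection $\Pi_d$ of $S_{h,N}$, we split
\begin{equation*}
    P_d-g
    =\Pi_d(S_{h,N}-g)+(\Pi_d g-g).
\end{equation*}
Using the Lebesgue-constant bound $\norm{\Pi_d}_\infty\leq C\log(d+2)$, the first term is at most $C\log(d+2)\,\norm{S_{h,N}-g}_{\infty,[-1,1]}$. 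The second term is at most $C\log(d+2)\,E_d(g)$, where $E_d(g)$ is the best uniform approximation error by even polynomials of degree $d$. The proof therefore reduces to three tasks: make the alias term, the tail term, and $E_d(g)$ each at most $\epsilon/[C'\log(d+2)]$.

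\emph{Alias term.} Take $h\leq 1/4$, so that $\abs{x+n/h}\geq\abs{n}/(2h)$ for $x\in[-1,1]$ and $n\ne0$. The alias sum in Eq.~\eqref{eq:poisson_alias_tail} is then bounded by $2\sum_{n\geq1}\mathrm e^{-\lambda(n/2h)^{\alpha}}$, a rapidly convergent series. It falls below the target once $(1/2h)^{\alpha}\gtrsim\lambda^{-1}\log(\log(d+2)/\epsilon)$; we also impose $1/h\gtrsim1$.

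\emph{Tail term.} Here we need decay of $f_{\alpha,T}$, which I take from Appendix~\ref{app:kernel}. For $0<\alpha<2$ and noninteger $\alpha$ in general, the decay is only algebraic, $\abs{f_{\alpha,T}(y)}\leq C_\alpha\lambda\abs{y}^{-1-\alpha}$; for even integers it is faster. With algebraic decay, $h\sum_{\abs{k}>N}\abs{f(kh)}\lesssim C_\alpha\lambda (Nh)^{-\alpha}$. This fixes $N$ polynomially in $\lambda/\epsilon$, which is harmless because $N$ only enters classical preprocessing. One also needs a crude bound near the origin, $\abs{f}\leq f(0)=C_\alpha\lambda^{-1/\alpha}$, to control the sum uniformly. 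The main obstacle is establishing a uniform tail bound valid for all $\alpha>0$ (the kernel is not positive for $\alpha>2$); I would first try integrating by parts in Eq.~\eqref{eq:inverse_fourier_kernel} twice, away from the singularity at $\xi=0$.

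\emph{Degree.} To obtain $E_d(g)\leq\epsilon/[C\log(d+2)]$ I would use Jackson-type results. In the entire case (even integer $\alpha$), Bernstein-ellipse bounds give $d=\Ocal(\lambda^{1/\alpha}+\log(1/\epsilon))$ up to the appropriate form. For the $\abs{x}^{\alpha}$ singularity, Jackson's theorem applied to the modulus of continuity gives a polynomial-in-$\epsilon^{-1}$ degree. Existence of some finite even $d$ is all that this theorem requires; the sharp rates are deferred. Combining the three bounds gives $\norm{P_d-g}_{\infty}\leq\epsilon/2$ on $[-1,1]$.

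\emph{Admissibility and the operator bound.} Compute the coefficients so that the computed polynomial $\tilde P_d$ is within $\epsilon/4$ of $P_d$, shrinking the approximation budget accordingly. Since $0\leq g\leq1$, we get $\abs{\tilde P_d}\leq1+\epsilon/2$, so $\tilde P_d/[2(1+\epsilon/2)]$ is a real even polynomial bounded by $1/2$ on $[-1,1]$. By the QSVT completion theorem it is realized by $d$ queries to $U_H$ with designated block $B=\tilde P_d(H/\norm{H})/[2(1+\epsilon/2)]$. Then
\begin{equation*}
    \norm{2B-\mathrm e^{-TH^{\alpha}}}
    \leq \frac{\norm{\tilde P_d-g}_\infty+(\epsilon/2)\norm{g}_\infty}{1+\epsilon/2}.
\end{equation*}
This expression is at most $\epsilon$ once $\norm{\tilde P_d-g}_\infty\leq\epsilon/2$, by the spectral argument of Eq.~\eqref{eq:operator_from_scalar}.
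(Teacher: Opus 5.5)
Your proposal is correct and follows essentially the paper's proof. It uses the same split $P_d-g=\Pi_d(S_{h,N}-g)+(\Pi_d g-g)$ with the $\Ocal(\log(d+2))$ Lebesgue constant, the same Poisson alias/tail control of the first term, and the same $(1+\epsilon/2)$ rescaling computation. The only real difference is that you bound $\Pi_d g-g$ through Lebesgue's inequality and best-approximation (Jackson/Bernstein) estimates, which loses a harmless $\log(d+2)$, whereas the paper sums the Chebyshev-coefficient tail directly.

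One aside is wrong, though the proof does not depend on it. Your parenthetical rate $d=\Ocal(\lambda^{1/\alpha}+\log(1/\epsilon))$ for even $\alpha$ is not what the Bernstein-ellipse argument gives: for $\lambda\gtrsim\log(1/\epsilon)$, optimizing the ellipse parameter yields $\lambda^{1/\alpha}\log^{1-1/\alpha}(1/\epsilon)$. For $\alpha=2$ the additive form is contradicted by the known optimal Gaussian degree $\Theta(\sqrt{\lambda\log(1/\epsilon)})$ when $\lambda\gg\log(1/\epsilon)\gg1$. Since the theorem only asks for some finite even $d$, this does not affect your argument.
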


\begin{proposition}[Quadratic lifting for the shifted signal]
\label{prop:quadratic_lifting}
Let $P_{2d}$ be the even Poisson-compiled polynomial for the scalar
target
$\mathrm e^{-T\norm{H}^{\alpha}\abs{y}^{2\alpha}}$; equivalently,
apply Theorem~\ref{thm:poisson_compilation} with exponent $2\alpha$
and replace $\norm{H}$ there by $\sqrt{\norm{H}}$.  Then
\begin{equation}
    P_{2d}\!\left(\sqrt{\frac{1+s}{2}}\right)
    \label{eq:shifted_lifted_polynomial}
\end{equation}
is a polynomial of degree at most $d$ in $s$ and satisfies
\begin{equation}
\max_{s\in[-1,1]}
\left|
P_{2d}\!\left(\sqrt{\frac{1+s}{2}}\right)
-
\mathrm e^{-T\norm{H}^{\alpha}[(1+s)/2]^\alpha}
\right|
\leq\frac{\epsilon}{2}.
\label{eq:shifted_lifted_error}
\end{equation}
    After the same boundedness rescaling as in
    Eq.~\eqref{eq:qsvt_rescaling}, decompose this mixed-parity polynomial
    into its even and odd parts.  Coherently selecting the two
    ordinary-QSVT circuits gives a constant-normalization block encoding
    of $\mathrm e^{-T H^\alpha}$ using $\Ocal(d)$ calls to $U_S$ and
    $U_S^\dagger$.
\end{proposition}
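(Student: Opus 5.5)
The plan is to reduce everything to Theorem~\ref{thm:poisson_compilation} through the exact substitution $y=\sqrt{(1+s)/2}$, or equivalently $s=2y^2-1$. First, apply Theorem~\ref{thm:poisson_compilation} with exponent $2\alpha$ and with $\norm{H}$ replaced by $\sqrt{\norm{H}}$. Since $(\sqrt{\norm{H}})^{2\alpha}=\norm{H}^{\alpha}$, the resulting even polynomial $P_{2d}$ satisfies
\begin{equation*}
\max_{y\in[-1,1]}\abs{P_{2d}(y)-\mathrm e^{-T\norm{H}^{\alpha}\abs{y}^{2\alpha}}}\leq\epsilon/2 .
\end{equation*}
The theorem only uses $\norm{H}$ as a positive scale parameter in the scalar kernel, so this replacement is legitimate at the level of scalar approximation.

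Second, establish the polynomial structure. Because $P_{2d}$ is even of degree $2d$, we can write $P_{2d}(y)=Q(y^2)$ with $\deg Q\leq d$. Then
\begin{equation*}
P_{2d}\bigl(\sqrt{(1+s)/2}\bigr)=Q\bigl((1+s)/2\bigr),
\end{equation*}
which is a polynomial of degree at most $d$ in $s$. In the Chebyshev form this is even cleaner: $T_{2j}(y)=T_j(2y^2-1)=T_j(s)$, so the lifted polynomial is obtained from Eq.~\eqref{eq:poisson_compiled_polynomial} simply by replacing $T_{2j}(x)$ with $T_j(s)$.

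For the error bound, as $s$ ranges over $[-1,1]$, the point $y=\sqrt{(1+s)/2}$ ranges over $[0,1]\subset[-1,1]$, and
\begin{equation*}
\abs{y}^{2\alpha}=[(1+s)/2]^{\alpha}.
\end{equation*}
Hence the uniform bound on $P_{2d}$ transfers pointwise and gives Eq.~\eqref{eq:shifted_lifted_error}. The same substitution shows that
\begin{equation*}
\sup_{s}\abs{Q((1+s)/2)}\leq\sup_{y}\abs{P_{2d}(y)}\leq 1+\epsilon/2 ,
\end{equation*}
after the coefficient-accuracy control from Appendix~\ref{app:poisson_compilation}. The rescaled polynomial $R(s)=Q((1+s)/2)/[2(1+\epsilon/2)]$ is therefore bounded by $1/2$ on $[-1,1]$.

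Third, build the circuit. Split $R=R_{\mathrm e}+R_{\mathrm o}$ into its even and odd parts, with $R_{\mathrm e}(s)=[R(s)+R(-s)]/2$. Each part inherits the bound $\abs{R_{\mathrm e}},\abs{R_{\mathrm o}}\leq 1/2\leq1$, so each is QSVT admissible of degree at most $d$ with the correct parity. The QSVT completion theorem~\cite{Gilyen2019qsvt} then gives circuits using at most $d$ calls each to $U_S$ and $U_S^\dagger$. A one-ancilla select with Hadamard preparation and unpreparation yields the block $[R_{\mathrm e}(S)+R_{\mathrm o}(S)]/2=R(S)/2$. By the spectral theorem applied to $S=2H/\norm{H}-I$, the operator $(1+S)/2$ equals $H/\norm{H}$. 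It follows that $4(1+\epsilon/2)\,R(S)/2$ approximates $\mathrm e^{-TH^\alpha}$ to operator error at most $\epsilon/2$. This is a normalization-$\Ocal(1)$ block encoding using $\Ocal(d)$ queries, and the constants are independent of $d$ and of the Poisson sample count.

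The main obstacle I expect is the boundedness bookkeeping rather than the algebra. One must check that the coefficient-accuracy guarantee of Appendix~\ref{app:poisson_compilation}, stated for the $x$-variable problem, still bounds $\abs{P_{2d}}$ on all of $[-1,1]$ after the parameter replacement. One must also check that parity splitting does not enlarge the sup norm. The substitution reduces the first point to the theorem as already proved, and the second follows from $\abs{R(\pm s)}\leq1/2$.
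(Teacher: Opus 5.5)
Your proposal is correct and follows the paper's own proof. Write the even $P_{2d}$ as a polynomial of degree at most $d$ in $y^2=(1+s)/2$. Transfer the uniform error and the boundedness pointwise from Theorem~\ref{thm:poisson_compilation} via $[(1+s)/2]^\alpha=\abs{y}^{2\alpha}$. Then conclude with $(I+S)/2=H/\norm{H}$ and the spectral theorem. Your explicit parity split with both parts bounded by $1/2$, the Hadamard select giving the block $R(S)/2$ with normalization $4(1+\epsilon/2)$, and the identity $T_{2j}(y)=T_j(s)$ only spell out details the paper leaves implicit.
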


\begin{proof}
Write $P_{2d}(y)=\sum_{j=0}^{d}c_jy^{2j}$.  The expression in
Eq.~\eqref{eq:shifted_lifted_polynomial} is exactly
$\sum_{j=0}^{d}c_j[(1+s)/2]^j$, so the displayed square root introduces
no branch and the result is a polynomial of degree at most $d$.
For this substitution,
\begin{equation}
\mathrm e^{-T(\sqrt{\norm{H}})^{2\alpha}
\abs{y}^{2\alpha}}
=
\mathrm e^{-T\norm{H}^{\alpha}[(1+s)/2]^\alpha}
\end{equation}
when $y=\sqrt{(1+s)/2}$.  The uniform error and boundedness therefore
follow directly from Theorem~\ref{thm:poisson_compilation}.  Finally,
$(I+S)/2=H/\norm{H}$, so the spectral theorem gives the asserted
operator transformation.
\end{proof}

The proof, including finite-precision evaluation of the kernel
samples, is given in Appendix~\ref{app:poisson_compilation}.
The factor $1/2$ in Eq.~\eqref{eq:qsvt_rescaling} is a universal
admissibility margin.  It only changes the state-preparation
probability by a constant and is unrelated to a Fourier LCU norm.
The $2N+1$ kernel values and the Bessel sums in
Eq.~\eqref{eq:poisson_compiled_polynomial} are classical data.  They
require $\Ocal(Nd)$ arithmetic operations once the kernel values are
available.  Standard QSP factorization then converts the resulting
$d+1$ coefficients into a phase sequence~\cite{Gilyen2019qsvt,
haah2019product}.  These costs do not change the number of calls to
$U_H$ or $U_S$, but they are part of the classical preprocessing
required to instantiate the circuit.
For general $\alpha$, the bounds below specify the number and precision
of the required one-dimensional kernel evaluations.  They do not assert
an optimized bit complexity for a particular numerical quadrature
implementation.

\subsection{Integer and noninteger powers}
\label{subsec:powers}

The inverse Fourier kernel determines the classical sample count,
while the regularity of
$\mathrm e^{-T\norm{H}^{\alpha}\abs{x}^{\alpha}}$ determines the QSVT
degree.  These two resources should not be conflated.

If $\alpha$ is an even positive integer, complex translation of the
integral in Eq.~\eqref{eq:inverse_fourier_kernel} gives the
steepest-descent bound
\begin{equation}
    \abs{f_{\alpha,T}(y)}
    \leq
    \frac{C_\alpha}{T^{\frac{1}{\alpha}}\norm{H}}
    \exp\!\left[
        -c_\alpha
        \left(
        \frac{\abs{y}}{T^{\frac{1}{\alpha}}\norm{H}}
        \right)^{\frac{\alpha}{\alpha-1}}
    \right].
    \label{eq:even_kernel_decay}
\end{equation}
Thus the kernel has a stretched-exponential tail.  If instead
$\alpha$ is not an even integer, the nonanalyticity at the origin and
the Fourier form of Watson's lemma give~\cite{wong2001asymptotic}
\begin{equation}
    f_{\alpha,T}(y)
    =
    \frac{
    2T\norm{H}^{\alpha}
    \Gamma(\alpha+1)\sin(\pi\alpha/2)}
    {(2\pi\abs{y})^{\alpha+1}}
    +
    o\!\left(\abs{y}^{-\alpha-1}\right).
    \label{eq:noneven_kernel_asymptotic}
\end{equation}
The coefficient in Eq.~\eqref{eq:noneven_kernel_asymptotic} is
nonzero exactly when $\alpha$ is not an even integer.  Odd integers
therefore have the same algebraic kernel mechanism as fractional
powers.  For $\alpha>2$ the kernel need not be nonnegative; this does
not obstruct the construction because its samples are aggregated
classically rather than prepared as LCU amplitudes.  Explicit choices
of $h$ and $N$ are deferred to
Appendix~\ref{app:kernel}, because they affect only classical
coefficient generation and not the number of oracle calls.

For the shifted signal, Proposition~\ref{prop:quadratic_lifting}
repeats the same analysis with exponent $2\alpha$.  This exponent is
an even integer exactly when $\alpha$ is a positive integer.  Hence
odd integer powers acquire the same stretched-exponential kernel as
even integer powers.  If $\alpha$ is noninteger, the lifted kernel has
tail $\Ocal(\abs{y}^{-2\alpha-1})$, improving the truncation
power from $\alpha$ to $2\alpha$.

\begin{theorem}[Polynomial degree in the two access models]
\label{thm:degree}
Let $0<\epsilon<1/4$.  Under access to $U_H$, if $\alpha$ is an even
positive integer, the Poisson-compiled polynomial can be chosen with
\begin{equation}
    d
    =
    \Ocal\!\left(
        \left(
        T\norm{H}^{\alpha}+\log\frac{1}{\epsilon}
        \right)^{\frac{1}{\alpha}}
        \log^{1-\frac{1}{\alpha}}\frac{1}{\epsilon}
    \right).
    \label{eq:even_degree}
\end{equation}
If $\alpha$ is an odd positive integer or is noninteger, the
parity-compatible construction under $U_H$ can be chosen with
\begin{equation}
    d
    =
    \Ocal\!\left(
        \norm{H}
        \left(\frac{T}{\epsilon}\right)^{\frac{1}{\alpha}}
    \right).
    \label{eq:noneven_degree}
\end{equation}
Under access to $U_S$, every positive integer $\alpha$ admits degree
\begin{equation}
    d
    =
    \Ocal\!\left(
        \left(
        T\norm{H}^{\alpha}+\log\frac{1}{\epsilon}
        \right)^{\frac{1}{2\alpha}}
        \log^{1-\frac{1}{2\alpha}}\frac{1}{\epsilon}
    \right).
    \label{eq:shifted_integer_degree}
\end{equation}
If $\alpha$ is noninteger, it is sufficient under $U_S$ to use
\begin{equation}
    d
    =
    \Ocal\!\left(
        \sqrt{\norm{H}}
        \left(\frac{T}{\epsilon}\right)^{\frac{1}{2\alpha}}
    \right).
    \label{eq:shifted_fractional_degree}
\end{equation}
When $T\norm{H}^{\alpha}\leq\epsilon$, the identity transformation already has the
required accuracy and the nontrivial terms in these bounds may be
omitted.
\end{theorem}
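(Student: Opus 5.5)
The plan is to reduce every case to a uniform Chebyshev approximation bound for $g(x)=\mathrm e^{-\tau\abs{x}^{\beta}}$ on $[-1,1]$. Here $\tau=T\norm{H}^{\alpha}$, and $\beta=\alpha$ under $U_H$ or $\beta=2\alpha$ under $U_S$. By Proposition~\ref{prop:quadratic_lifting}, the $U_S$ statements follow from the $U_H$ statements with $\alpha\to2\alpha$ and $\norm{H}\to\sqrt{\norm{H}}$; note that $\tau$ is unchanged, since $(\sqrt{\norm{H}})^{2\alpha}=\norm{H}^\alpha$. The degree in $s$ is half the even degree in $y$, which the $\Ocal$ absorbs. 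Under this substitution, Eq.~\eqref{eq:shifted_integer_degree} is Eq.~\eqref{eq:even_degree} with exponent $2\alpha$, and Eq.~\eqref{eq:shifted_fractional_degree} is Eq.~\eqref{eq:noneven_degree} with exponent $2\alpha$. So it suffices to prove the two $U_H$ bounds for a general exponent $\beta$. In each case I then invoke Theorem~\ref{thm:poisson_compilation}: the Chebyshev projection of the Poisson sum is within $C\log(d+2)$ times the best error, so it is enough to exhibit the best-approximation rate with $\epsilon$ replaced by $\epsilon/\log$. The resulting $\log\log$ loss is absorbed, or handled by working directly with the aliased cosine sum.

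\emph{Entire case, $\beta$ an even integer.} Here $g$ is entire. I will bound its Chebyshev coefficients on a Bernstein ellipse $E_\rho$ via $\abs{a_k}\le 2\rho^{-k}\max_{E_\rho}\abs{g}$. On $E_\rho$ we have $\abs{x}\le(\rho+\rho^{-1})/2\approx\rho/2$, so $\max_{E_\rho}\abs{g}\le \exp(\tau(\rho/2)^\beta)$. The tail after degree $d$ is therefore at most about $\exp(\tau(\rho/2)^\beta-d\log\rho)$. There are two regimes. If $\tau\lesssim\log(1/\epsilon)$, choose $(\rho/2)^\beta\sim\log(1/\epsilon)/\tau$. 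If $\tau\gtrsim\log(1/\epsilon)$, choose $\rho=1+r$ with $r\sim (\log(1/\epsilon)/\tau)^{1/\beta}$; here $\max\abs{g}$ on the thin ellipse is bounded by $\exp(C\tau r^\beta)$, because $\mathrm{Re}\,x^\beta$ can be negative but $\abs{\mathrm{Im}\,x}\lesssim r$. The exponent then behaves like $\tau r^\beta - dr$. Balancing gives $d\sim \tau^{1/\beta}\log^{1-1/\beta}(1/\epsilon)$ in the second regime and $d\sim \log(1/\epsilon)/\log(\cdot)\lesssim\log(1/\epsilon)$ in the first. Both are captured by $(\tau+\log\frac1\epsilon)^{1/\beta}\log^{1-1/\beta}\frac1\epsilon$. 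This is the steepest-descent scaling visible in Eq.~\eqref{eq:even_kernel_decay}.

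\emph{Nonanalytic case.} I expect this to be the main obstacle, because the bound must be uniform in $\tau$ with the precise factor $\norm{H}T^{1/\beta}=\tau^{1/\beta}$. My approach is to smooth at scale $\delta$ using the exact Poisson identity Eq.~\eqref{eq:poisson_exact}. Near the origin, $g(x)=1-\tau\abs{x}^\beta+\dots$, and the singular part $\tau\abs{x}^\beta$ has Jackson/Bernstein error $\Ocal(\tau d^{-\beta})$ for $0<\beta$ noneven. Away from the origin, $g$ is analytic in a neighbourhood of width proportional to $\abs{x}$, so a standard localized Jackson estimate (or the known rate $E_d(\abs{x}^\beta)\sim C_\beta d^{-\beta}$ combined with a Lipschitz-type composition bound) gives $E_d(g)\le C_\beta\tau d^{-\beta}$ uniformly in $\tau$. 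Bounding $\abs{g^{(k)}}$ weighted by $(1-x^2)^{k/2}$ rescales to $\tau^{k/\beta}$ times a $\beta$-dependent constant, which handles large $\tau$. Setting $C_\beta\tau d^{-\beta}\le\epsilon/(C\log d)$ gives $d\sim(\tau/\epsilon)^{1/\beta}$ up to the logarithm, which matches Eq.~\eqref{eq:noneven_degree}. To remove the stray logarithm, I would bypass the projection: I would bound the aliasing term $\sum_{n\ne0}$ and the truncation of the Jacobi--Anger series directly, using the tail Eq.~\eqref{eq:noneven_kernel_asymptotic}. With $1/h\sim d$, this yields aliases of size $\tau d^{-\beta}$ without projection loss.

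Finally, if $\tau\le\epsilon$, then $\abs{g-1}\le\epsilon$ and the constant polynomial suffices. This justifies dropping the nontrivial terms, in line with the stated resource conventions.
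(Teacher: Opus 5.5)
Your reduction of the $U_S$ bounds to the $U_H$ bounds through Proposition~\ref{prop:quadratic_lifting} matches the paper and is sound: $\tau=T\norm{H}^{\alpha}$ is unchanged, and the $s$-degree is half the $y$-degree. Your two-regime Bernstein-ellipse argument for even exponents also matches the paper and is sound. It works, however, because it bounds the Chebyshev coefficients of $g$, and hence bounds $\norm{S_dg-g}_\infty$ directly, where $S_d$ is the degree-$d$ Chebyshev projection. Your announced framework of ``best error with $\epsilon$ replaced by $\epsilon/\log$'' would not be harmless here. At fixed $\epsilon$ and $\tau\to\infty$, the extra $\log\log d$ inside the logarithm costs a factor $(\log\log\tau)^{1-1/\beta}$ that Eq.~\eqref{eq:even_degree} does not contain.

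The genuine gap is the non-even case. There your main line controls only the best error $E_d(g)\le C_\beta\tau d^{-\beta}$. That bound is itself only asserted: composing the Lipschitz map $u\mapsto\mathrm e^{-u}$ with an approximant of $\abs{x}^\beta$ does not give a polynomial. Passing to the projection through the Lebesgue constant then yields only $d=\Ocal(\norm{H}[T\log(d+2)/\epsilon]^{1/\alpha})$. That is a full $\log^{1/\alpha}$ loss, not a $\log\log$ one, so Eq.~\eqref{eq:noneven_degree} is not proved.

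The missing observation is that the logarithm never has to touch $g$. Let $F_{N,h}$ be the truncated cosine sum in Eq.~\eqref{eq:cosine_poisson_sum}. Then $P_d=S_dF_{N,h}$, so $P_d-g=S_d(F_{N,h}-g)+(S_dg-g)$.
\begin{itemize}
\item The factor $C\log(d+2)$ multiplies only the Poisson residual $F_{N,h}-g$. The classical choices of $h$ and $N$ in Eqs.~\eqref{eq:sample_spacing}--\eqref{eq:noneven_frequency_radius} make it small at no cost in degree.
\item The term $S_dg-g$ is bounded by the Chebyshev coefficient tail of $g$ itself. The paper obtains this by writing $x=\cos\theta$ and localizing a Watson-lemma estimate at $\theta=\pi/2$. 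This gives $\abs{a_j(g)}\le C_\beta\tau j^{-\beta-1}$ for $j\ge C_\beta\max\{1,\tau^{1/\beta}\}$, and hence Eq.~\eqref{eq:noneven_chebyshev_tail} with no logarithm.
\end{itemize}

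Your bypass does not replace this as written. With $1/h\sim d$, the aliases are governed by $\exp[-\tau(1/h-1)^\beta]$. They are either exponentially small, or not small at all when $\tau d^{\beta}\lesssim1$, which happens for $\epsilon\le\tau\lesssim\sqrt{\epsilon}$ at the target degree. They are never of size $\tau d^{-\beta}$.

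The algebraic rate actually sits in the sampled-kernel tail. Here $f$ is the inverse Fourier transform of $g$, and by Lemma~\ref{lem:kernel_decay}, $h\sum_{\abs{k}>N}\abs{f(kh)}\lesssim\tau(Nh)^{-\beta}$ for $Nh\ge C_\beta\tau^{1/\beta}$. A working bypass must therefore tie the frequency radius $Nh$, not $1/h$, to $d$. You would take $2\pi Nh$ a small fixed fraction of $d$, so that every retained Jacobi--Anger series is truncated in its super-exponential Bessel regime. You would then still have to show that this truncation does not add a $\log(1/\epsilon)$ term to $d$ when $\tau$ is close to $\epsilon$.
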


The single-factor forms in Eqs.~\eqref{eq:even_degree} and
\eqref{eq:shifted_integer_degree} are uniform upper bounds.  Up to
constants depending only on $\alpha$, they are equivalent to writing
the physical-scale contribution and the high-precision logarithm as
two additive terms.  The sharper joint interpolation is given in
Eqs.~\eqref{eq:even_degree_interpolation} and
\eqref{eq:shifted_integer_degree_interpolation} of
Appendix~\ref{app:poisson_compilation}.  In particular, when
$T\norm{H}^{\alpha}>0$ is fixed and $\epsilon\to0$, the uniform
$\log(1/\epsilon)$ behavior can be sharpened to
$\log(1/\epsilon)/\log\log(1/\epsilon)$.  This fixed-scale refinement
is the one used in the precision column of
Table~\ref{tab:degree_regimes}.

For $\alpha=1$, Eq.~\eqref{eq:shifted_integer_degree} becomes
\begin{equation}
    d=\Ocal\!\left(
    \sqrt{
    \left(T\norm{H}+\log\frac{1}{\epsilon}\right)
    \log\frac{1}{\epsilon}}
    \right).
    \label{eq:shifted_square_root_uniform}
\end{equation}
The bound in Eq.~\eqref{eq:shifted_square_root_uniform} is
$\Ocal(\sqrt{T\norm{H}\log(1/\epsilon)})$ whenever
$T\norm{H}\geq\log(1/\epsilon)$.  In this fast-forwarding regime,
direct access to $2H/\norm{H}-I$ reproduces the square-root dependence
used by amplitude--phase separation
~\cite{aggarwal2022optimaldegree,
low2017hamiltoniansimulationuniformspectral,
hu2026quantumsimulationnonunitarydynamics}.  Outside this regime,
Eq.~\eqref{eq:shifted_square_root_uniform} retains the uniform
high-precision correction; Eq.~\eqref{eq:shifted_integer_degree_interpolation}
is sharper still.

For even integer $\alpha$ under standard access, the physical-scale part of
Eq.~\eqref{eq:even_degree} is the moderate-precision contribution from
the stretched-exponential Fourier tail.  The uniform high-precision
part is a simplification of the entire-function contribution in
Eq.~\eqref{eq:even_degree_interpolation}.  For non-even $\alpha$,
Eq.~\eqref{eq:noneven_degree} is set by the interior nonanalyticity.
Under shifted access, Eq.~\eqref{eq:shifted_integer_degree} follows by
making the replacements $\alpha\mapsto2\alpha$ and
$\norm{H}\mapsto\sqrt{\norm{H}}$ before the quadratic lifting.  The
same replacement in the algebraic case gives
Eq.~\eqref{eq:shifted_fractional_degree}.  Thus odd integers move into
the entire-function regime, whereas noninteger powers remain
algebraic but gain a factor of two in the precision exponent.
The direct estimates used here are classical Chebyshev and
modulus-of-smoothness bounds~\cite{devore1993constructive,
ditzian1987moduli,trefethen2019approximation}.  Their converse forms
will be used below only in the parameter regimes for which they are
uniform.

\begin{table*}[t]
\caption{Approximation regimes of the Poisson constructions in the two
access models.  The standard-oracle non-even precision entry is a
matching upper and lower bound for gap-independent ordinary QSVT with
a single phase sequence.  It is not a minimax claim for generalized
indefinite-parity transformations.  The
physical-scale column is asymptotic as
$T\norm{H}^{\alpha}\to\infty$ at fixed error.  The precision column
keeps $T\norm{H}^{\alpha}>0$ fixed.  Each $\Theta$ entry is a
one-parameter statement in the indicated limit; the table does not
claim that the uniform joint upper bounds are minimax for all
time--precision scalings.}
\label{tab:degree_regimes}
\begin{ruledtabular}
\begin{tabular}{llcc}
access & $\alpha$ & fixed-error degree & fixed-scale precision degree \\
\hline
 $H/\norm{H}$ & even positive integer
& $\Theta((T\norm{H}^{\alpha})^{\frac{1}{\alpha}})$
& $\Theta\!\left(\frac{\log(1/\epsilon)}{\log\log(1/\epsilon)}\right)$ \\
 $H/\norm{H}$ & odd positive integer or noninteger
& $\Theta((T\norm{H}^{\alpha})^{\frac{1}{\alpha}})$
& $\Theta(\epsilon^{-\frac{1}{\alpha}})$
\\
 $2H/\norm{H}-I$ & positive integer
& $\Theta((T\norm{H}^{\alpha})^{\frac{1}{2\alpha}})$
& $\Theta\!\left(\frac{\log(1/\epsilon)}{\log\log(1/\epsilon)}\right)$ \\
 $2H/\norm{H}-I$ & noninteger
& $\Theta((T\norm{H}^{\alpha})^{\frac{1}{2\alpha}})$
& $\Theta(\epsilon^{-\frac{1}{2\alpha}})$
\end{tabular}
\end{ruledtabular}
\end{table*}

\subsection{End-to-end complexity and approximation limits}
\label{subsec:complexity}

One application of a degree-$d$ polynomial eigenvalue transformation
uses $\Ocal(d)$ calls to the corresponding signal oracle and its
inverse.  Applied to the
normalized input $\ket{u_0}=u_0/\norm{u_0}$, the desired output block
in Theorem~\ref{thm:poisson_compilation} has norm
$\Theta(1/u_r)$.  Fixed-point amplitude amplification therefore
prepares a state proportional to $u_T$ with
$\Ocal(u_r d)$ block-encoding queries, assuming an upper bound on
$u_r$~\cite{Yoder2014fixedpoint}.  The stated cost gives constant
success probability; standard repetition gives any prescribed higher
success probability.

\begin{theorem}[End-to-end state preparation]
\label{thm:end_to_end}
Let $0<\epsilon<1/2$, and require Euclidean error at most $\epsilon$
in the normalized output state.  For an even positive integer
$\alpha$, the number of calls to $U_H$ and $U_H^\dagger$ is
\begin{equation}
    \Ocal\!\left(
    u_r
        \left(
        T\norm{H}^{\alpha}+\log\frac{u_r}{\epsilon}
        \right)^{\frac{1}{\alpha}}
        \log^{1-\frac{1}{\alpha}}
        \frac{u_r}{\epsilon}
    \right).
    \label{eq:even_end_to_end}
\end{equation}
If $\alpha$ is an odd positive integer or is noninteger, the
corresponding number is
\begin{equation}
    \Ocal\!\left(
    u_r\norm{H}
    \left(\frac{T u_r}{\epsilon}\right)^{\frac{1}{\alpha}}
    \right).
    \label{eq:noneven_end_to_end}
\end{equation}
Under access to $U_S$, every positive integer $\alpha$ instead uses
\begin{equation}
    \Ocal\!\left(
    u_r
        \left(
        T\norm{H}^{\alpha}+\log\frac{u_r}{\epsilon}
        \right)^{\frac{1}{2\alpha}}
        \log^{1-\frac{1}{2\alpha}}
        \frac{u_r}{\epsilon}
    \right)
    \label{eq:shifted_integer_end_to_end}
\end{equation}
calls to $U_S$ and $U_S^\dagger$.  For noninteger $\alpha$, a
sufficient shifted-oracle query bound is
\begin{equation}
    \Ocal\!\left(
    u_r\sqrt{\norm{H}}
    \left(\frac{T u_r}{\epsilon}\right)^{\frac{1}{2\alpha}}
    \right).
    \label{eq:shifted_fractional_end_to_end}
\end{equation}
\end{theorem}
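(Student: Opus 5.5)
The plan is to reduce the end-to-end statement to Theorem~\ref{thm:degree} and Proposition~\ref{prop:quadratic_lifting} through an error-propagation argument in which the operator precision is set to $\epsilon'=c\,\epsilon/u_r$ for a small universal constant $c$. We then multiply the resulting degree by the $\Ocal(u_r)$ rounds of fixed-point amplitude amplification.

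\textbf{Step 1: operator error to state error.} Theorem~\ref{thm:poisson_compilation} gives a block $B$ with $\norm{2B-\mathrm e^{-TH^\alpha}}\le\epsilon'$. Proposition~\ref{prop:quadratic_lifting} gives the analogous constant-normalization block under $U_S$. Applied to $\ket{u_0}$, the unnormalized output is $v=2B\ket{u_0}$, and it satisfies $\norm{v-u_T/\norm{u_0}}\le\epsilon'$. Since $\norm{u_T}/\norm{u_0}=1/u_r$, the elementary inequality $\norm{a/\norm{a}-b/\norm{b}}\le 2\norm{a-b}/\norm{b}$ bounds the normalized-state error by $2u_r\epsilon'\le\epsilon/2$. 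The choice $\epsilon'<1/(2u_r)$ also guarantees $\norm{v}\ge 1/(2u_r)$. The success amplitude of the designated block is therefore $\Theta(1/u_r)$, with the constant factor $2$ and the parity-selection ancilla of Proposition~\ref{prop:quadratic_lifting} absorbed into the constant.

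\textbf{Step 2: amplification.} We use fixed-point amplitude amplification~\cite{Yoder2014fixedpoint}. Given the known upper bound on $u_r$, it reaches constant success probability with $\Ocal(u_r)$ uses of the block circuit and its inverse. Any residual amplification error is allotted the remaining $\epsilon/2$; this costs only a $\log(1/\epsilon)$ factor in the fixed-point schedule. To avoid that factor, we instead note that the fixed-point procedure only needs to reach constant success probability, followed by measurement-flagged postselection, which is exact on success. Each use of the block costs $\Ocal(d)$ oracle queries, where $d$ is the degree from Theorem~\ref{thm:degree} evaluated at precision $\epsilon'$. For mixed parity the cost is twice that.

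\textbf{Step 3: substitution.} Substituting $\epsilon\mapsto c\epsilon/u_r$ into Eqs.~\eqref{eq:even_degree}--\eqref{eq:shifted_fractional_degree} and multiplying by $u_r$ yields Eqs.~\eqref{eq:even_end_to_end}--\eqref{eq:shifted_fractional_end_to_end} directly. The constant $c$ is absorbed because $\log(u_r/(c\epsilon))=\Ocal(\log(u_r/\epsilon))$ in the stated high-precision convention, and $(u_r/(c\epsilon))^{1/\alpha}$ changes only by a constant depending on $\alpha$. Theorem~\ref{thm:degree} requires $\epsilon'<1/4$, which holds since $u_r\ge1$ and $\epsilon<1/2$.

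\textbf{Main obstacle.} The delicate point is Step~1's treatment of the normalization and the postselected branch. Two things must be checked. First, the error bound holds for the actual normalized post-amplification state, not just for the block. Second, the amplification rounds do not amplify the operator error beyond a constant factor. Fixed-point amplification acts on the exact two-dimensional subspace spanned by the good and bad components of the implemented state, so its output, after flagging, is exactly proportional to $v$. I would therefore argue that the only error is the $2u_r\epsilon'$ bound from Step~1. The remaining bookkeeping is routine: account for the coefficient-precision error from Appendix~\ref{app:poisson_compilation}, already included in $\epsilon'$, and the constant query overhead.
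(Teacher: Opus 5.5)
Your proposal is correct and follows the paper's own argument in Appendix~\ref{app:end_to_end}. There, the normalization inequality shows that block error of order $\epsilon/u_r$ gives normalized-state error $\epsilon$, fixed-point amplitude amplification uses $\Ocal(u_r)$ block applications for constant success probability, and substituting $\epsilon/u_r$ into Theorem~\ref{thm:degree} yields all four bounds. Your remark that amplification stays in the two-dimensional good/bad subspace, so the flagged output is exactly the normalized implemented vector, spells out a step the paper leaves implicit.
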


For the integer cases, fixing $T\norm{H}^{\alpha}$ and $u_r$ sharpens
the uniform high-precision order implicit in
Eqs.~\eqref{eq:even_end_to_end} and
\eqref{eq:shifted_integer_end_to_end} from $\log(u_r/\epsilon)$ to
$\log(u_r/\epsilon)/\log\log(u_r/\epsilon)$.  The longer interpolation
formula is needed only when time, normalization, and precision vary
simultaneously.

The factor $u_r$ has two roles.  To obtain normalized-state error
$\epsilon$, Theorem~\ref{thm:degree} is invoked with polynomial error
of order $\epsilon/u_r$, and amplitude amplification requires
$\Theta(u_r)$ uses of that polynomial.  There is no additional
Fourier coefficient one-norm in
Eqs.~\eqref{eq:even_end_to_end} and
\eqref{eq:noneven_end_to_end}, nor in their shifted counterparts.

The query bounds treat $U_H$ or $U_S$ as an exact oracle.  If each
implemented call differs from the corresponding ideal oracle by at
most $\epsilon_H$ in operator norm, a
telescoping argument bounds the error of one degree-$d$ polynomial
transformation
by $d\epsilon_H$.  It is therefore sufficient to synthesize each
oracle call and each elementary QSP rotation to error
$\Ocal(\epsilon/(u_r d))$.  This requirement affects gate synthesis,
not the number of block-encoding queries.

The degree estimates also have a direct approximation-theoretic
interpretation.  For fixed even integer $\alpha$, classical Bernstein
theory for entire functions of finite order explains the two regimes
in Eq.~\eqref{eq:even_degree_interpolation}.  At constant error, an interior
Bernstein inequality implies the matching lower bound
$\Omega(\norm{H}T^{\frac{1}{\alpha}})$.  At fixed $T$ and $H$, as
$\epsilon\rightarrow0$, finite-order entire-function theory gives the
matching order
$\Theta(\log(1/\epsilon)/\log\log(1/\epsilon))$~\cite{bernstein1912ordre,
timan1963theory,trefethen2019approximation}.
For $\alpha=2$, the
full joint time--precision expression agrees with the known optimal
degree for Gaussian exponentials~\cite{aggarwal2022optimaldegree}.

If $\alpha$ is not an even integer, the best approximation error for
the even extension implemented in this paper is proportional to
$d^{-\alpha}$ as $d\to\infty$, with constants depending on $\alpha$,
$T$, and $H$~\cite{bernstein1914sur,devore1993constructive,
ditzian1987moduli}.  Hence the
$\epsilon^{-\frac{1}{\alpha}}$ exponent in
Eq.~\eqref{eq:noneven_degree} is optimal for gap-independent ordinary
QSVT with one phase sequence: an odd sequence already has unit error at
the origin.  Together with the uniform
constant-error lower bound below, this gives the two statements
recorded in Table~\ref{tab:degree_regimes}.  We do not claim a matching
two-parameter lower bound with constants uniform in both
$T\norm{H}^{\alpha}$ and $\epsilon$ for every non-even power, or a
lower bound against generalized indefinite-parity constructions.

For $U_S$, Proposition~\ref{prop:quadratic_lifting} identifies
degree-$d$ polynomials in the shifted variable with even degree-$2d$
polynomials in the lifted variable. Consequently, integer $\alpha$
yields entire-function approximation of order $2\alpha$, whereas
noninteger $\alpha$ has fixed-scale error $\Theta(d^{-2\alpha})$.
Markov's endpoint inequality further gives the matching fixed-error
lower bound
\[
\Omega\!\left(
(T\norm{H}^{\alpha})^{\frac{1}{2\alpha}}
\right).
\]
Thus all entries in Table~\ref{tab:degree_regimes} are tight in the two
separate limits stated in its caption. The standard-signal non-even
result is optimal only for ordinary single-sequence QSVT, and no
general joint two-parameter minimax claim is made.

\section{Applications}
\label{sec:applications}

\subsection{Heat, biharmonic, and fractional dissipation}
\label{subsec:pde}

Let $\mathcal L\succeq0$ be a spatially discretized elliptic operator.
The semidiscrete dissipative equation
\begin{equation}
    \frac{\mathrm d u(t)}{\mathrm dt}
    =
    -\kappa\mathcal L^\alpha u(t)
    \label{eq:pde_family}
\end{equation}
has solution
\begin{equation}
    u(T)=
    \mathrm e^{-\kappa T\mathcal L^\alpha}u(0).
    \label{eq:pde_solution}
\end{equation}
Quantum partial differential equation (PDE) algorithms based on spectral discretization, finite
differences, time marching, Dyson series, and Schr\"odingerization
provide complementary access models and error accounts
~\cite{childs2020spectral,JIN2022111641,PhysRevA.108.032603,
Fang2023timemarchingbased,Berry_2024,PhysRevLett.133.230602,
jin2025schrodingerizationmethodlinearnonunitary,Gonzalez_Conde_2023}.
Here the spatial
discretization is fixed, and the task is the matrix-function step in
Eq.~\eqref{eq:pde_solution}.

Let $\beta_{\mathcal L}\geq\norm{\mathcal L}$ be the known block
normalization.  We distinguish access to
$\mathcal L/\beta_{\mathcal L}$ from direct unit-normalized access to
$2\mathcal L/\beta_{\mathcal L}-I$.  The following costs prepare the
normalized state proportional to $u(T)$ with Euclidean error at most
$\epsilon$ and constant success probability.  The corresponding
operator-block costs follow by removing the outer factor $u_r$ and
replacing every occurrence of $u_r$ inside the formulas by one.

\begin{samepage}
For the heat equation, $\mathcal L$ discretizes $-\Delta$ and
$\alpha=1$.  The resulting query bounds are displayed below.
\begin{equation}
\begin{aligned}
\text{standard:}\quad
&\Ocal\!\left(
\frac{u_r^2\beta_{\mathcal L}\kappa T}{\epsilon}
\right),
\\[2mm]
\text{shifted:}\quad
&\Ocal\!\left(
u_r\left(
\beta_{\mathcal L}\kappa T
+\log\frac{u_r}{\epsilon}
\right)^{\frac{1}{2}}
\log^{\frac{1}{2}}\!\frac{u_r}{\epsilon}
\right).
\end{aligned}
\label{eq:heat_application_queries}
\end{equation}
\end{samepage}
Thus direct shifted access changes both the physical dependence and the
precision dependence of the standard even extension.  When
$\beta_{\mathcal L}\kappa T\geq\log(u_r/\epsilon)$, the shifted bound is
$\Ocal(u_r\sqrt{\beta_{\mathcal L}\kappa T\log(u_r/\epsilon)})$.
This is the square-root dissipative dependence used by APS, now obtained
directly from the shifted signal by one constant-normalization polynomial
eigenvalue transformation
~\cite{aggarwal2022optimaldegree,
low2017hamiltoniansimulationuniformspectral,
hu2026quantumsimulationnonunitarydynamics}.

For biharmonic dissipation,
$\partial_tu=-\kappa(-\Delta)^2u$, one sets $\alpha=2$.  This model
appears, for example, in linearized surface-diffusion and thermal
grooving problems~\cite{mullins1957thermal}.  The explicit query bounds
are
\begin{equation}
\begin{aligned}
\text{standard:}\quad
&\Ocal\!\left(
u_r\left(
\beta_{\mathcal L}^{2}\kappa T
+\log\frac{u_r}{\epsilon}
\right)^{\frac{1}{2}}
\log^{\frac{1}{2}}\!\frac{u_r}{\epsilon}
\right),
\\[2mm]
\text{shifted:}\quad
&\Ocal\!\left(
u_r\left(
\beta_{\mathcal L}^{2}\kappa T
+\log\frac{u_r}{\epsilon}
\right)^{\frac{1}{4}}
\log^{\frac{3}{4}}\!\frac{u_r}{\epsilon}
\right).
\end{aligned}
\label{eq:biharmonic_application_queries}
\end{equation}
The standard formula is the Gaussian regime whose joint
time--precision degree is minimax optimal
~\cite{aggarwal2022optimaldegree}.  The shifted formula is conditional
on the stronger signal access and has the fourth-root scale
$(\kappa T\beta_{\mathcal L}^{2})^{\frac{1}{4}}$.  In both integer
examples, fixing $\kappa T\beta_{\mathcal L}^{\alpha}$ sharpens the
uniform high-precision order to
$\log(u_r/\epsilon)/\log\log(u_r/\epsilon)$.

For every noninteger $\alpha>0$, fractional diffusion gives
\begin{equation}
\begin{aligned}
\text{standard:}\quad
&\Ocal\!\left(
u_r\beta_{\mathcal L}
\left(\frac{\kappa T u_r}{\epsilon}\right)^{\frac{1}{\alpha}}
\right),
\\[1mm]
\text{shifted:}\quad
&\Ocal\!\left(
u_r\sqrt{\beta_{\mathcal L}}
\left(\frac{\kappa T u_r}{\epsilon}\right)^{\frac{1}{2\alpha}}
\right).
\end{aligned}
\label{eq:fractional_application_queries}
\end{equation}
No oracle for $\mathcal L^\alpha$ is assumed.  The fractional-power
exponential is generated by the polynomial applied to the sparse signal
itself
~\cite{METZLER20001,LISCHKE2020109009}.  In particular, the shifted
model attains the optimal fixed-scale exponent
$\epsilon^{-\frac{1}{2\alpha}}$ proved in Appendix~\ref{app:optimality}.

The shifted access assumption is concrete for the usual nearest-neighbor
finite-difference Laplacian.  On a $D$-dimensional grid of spacing
$\Delta x$, write
\begin{equation}
    \mathcal L
    =
    \frac{1}{(\Delta x)^2}
    \left(2D I-A_{\mathrm{nn}}\right),
    \label{eq:finite_difference_laplacian}
\end{equation}
where $A_{\mathrm{nn}}$ is the nearest-neighbor adjacency matrix.  Setting
$\beta_{\mathcal L}=4D/(\Delta x)^2$ gives the exact identity
\begin{equation}
    \frac{2\mathcal L}{4D/(\Delta x)^2}-I
    =
    -\frac{A_{\mathrm{nn}}}{2D}.
    \label{eq:laplacian_shifted_signal}
\end{equation}
For a periodic grid, $4D/(\Delta x)^2=\norm{\mathcal L}$ whenever the
highest Fourier mode is present, and every row on the right-hand side
of Eq.~\eqref{eq:laplacian_shifted_signal} has absolute sum one.  For
Dirichlet boundaries, the same number is a known upper bound on
$\norm{\mathcal L}$ and boundary rows have absolute sum at most one.
Because the matrix is $2D$-sparse and its largest entry has magnitude
$1/(2D)$, standard sparse-access block encoding has normalization
$2D\times1/(2D)=1$~\cite{Berry2007,Gilyen2019qsvt,
Sunderhauf2024blockencoding,Camps_2022}.
The shifted signal is also the negative transition matrix of the
nearest-neighbor random walk~\cite{1366222}.  Thus its oracle can be prepared
directly without first forming the generally dense fractional power
$\mathcal L^\alpha$.  For nonperiodic boundaries, the complexity
formulas use $4D/(\Delta x)^2$ in place of the exact spectral norm, as
specified after Eq.~\eqref{eq:normalized_target}.

At fixed $u_r$ and $\epsilon$, Eq.~\eqref{eq:heat_application_queries}
has shifted-oracle scale
$\Ocal(\sqrt{\kappa TD}/\Delta x)$.  The standard and shifted
biharmonic scales are, respectively,
$\Ocal(D\sqrt{\kappa T}/(\Delta x)^2)$ and
$\Ocal(\sqrt D(\kappa T)^{\frac{1}{4}}/\Delta x)$.  For noninteger $\alpha$,
the full grid-dependent terms in Eq.~\eqref{eq:fractional_application_queries}
are
\begin{equation}
\Ocal\!\left(
\frac{D u_r}{(\Delta x)^2}
\left[\frac{\kappa T u_r}{\epsilon}\right]^{\frac{1}{\alpha}}
\right),
\qquad
\Ocal\!\left(
\frac{\sqrt D u_r}{\Delta x}
\left[\frac{\kappa T u_r}{\epsilon}\right]^{\frac{1}{2\alpha}}
\right)
\label{eq:fractional_grid_queries}
\end{equation}
for standard and shifted access.  The Holtsmark stable index $3/2$
~\cite{zolotarev1986stable} corresponds here to $\alpha=3/4$; the
shifted expression then becomes
$\Ocal((\sqrt D/\Delta x)u_r^{\frac{5}{3}}
(\kappa T/\epsilon)^{\frac{2}{3}})$, reproducing the explicit mesh and
success-amplification dependence of the earlier Poisson-summation
construction.
Each result additionally uses $\Ocal(u_r)$ calls to the initial-state
oracle.

These statements concern the semidiscrete propagator.  A continuum
PDE guarantee must additionally allocate error to spatial
discretization, preparation of $u(0)$, and extraction of the desired
observable.  In particular, mesh refinement generally increases
$\norm{\mathcal L}$, so the matrix-function query bound alone does not
constitute an end-to-end exponential speedup for a continuum problem.

\subsection{Time-independent non-Hermitian simulation}
\label{subsec:nonhermitian}

Consider the time-independent generator
\begin{equation}
    A=L+\mathrm iG,
    \qquad
    L=L^\dagger\succeq0,
    \qquad
    G=G^\dagger.
    \label{eq:cartesian_decomposition}
\end{equation}
The access model supplies a block encoding of the full generator,
\begin{equation}
    (\bra{0^a}\otimes I)U_A(\ket{0^a}\otimes I)
    =\frac{A}{\beta_A},
    \qquad
    \beta_A\geq\norm{A}.
    \label{eq:nonhermitian_access}
\end{equation}
The adjoint circuit $U_A^\dagger$ block encodes
$A^\dagger/\beta_A$, and hence
\begin{equation}
    \frac{L}{\beta_A}
    =\frac12\left(\frac{A}{\beta_A}
    +\frac{A^\dagger}{\beta_A}\right),
    \qquad
    \frac{G}{\beta_A}
    =\frac{1}{2\mathrm i}\left(\frac{A}{\beta_A}
    -\frac{A^\dagger}{\beta_A}\right).
    \label{eq:cartesian_block_encodings}
\end{equation}
Each expression is a two-term LCU with coefficient one-norm one.
Thus both Hermitian blocks retain the single normalization $\beta_A$
and use only a constant number of calls to $U_A$ and $U_A^\dagger$
~\cite{Childs2012lcu,Gilyen2019qsvt,
dong2025productsblockencodings}.
Contour-integral, Laplace-transform, and eigenvalue-transformation
methods provide other block-encoded routes to non-Hermitian matrix
functions~\cite{Takahira_2020,takahira2021quantumalgorithmsbasedblockencoding,
jiang2026contourintegralbasedquantumeigenvalue,10756112,
an2024laplacetransformbasedquantum,chan2023simulatingnonunitarydynamicsusing,
wang2025quantumsimulationnonunitarydynamics}.  The construction below
instead preserves the Hamiltonian branches of LCHS\@.

The Weyl--Poisson construction uses two complementary ingredients.  Weyl
calculus lifts a scalar Fourier representation to the noncommuting line
of Hamiltonians $kL+G$
~\cite{ni2026quantumeigenvaluetransformationlinear}.  The scalar
surrogate is entire, while its Gaussian-regularized Fourier kernel is
analytic in a strip; these properties support the contour and alias
estimates below.  The kernel and query guarantee come from optimal
LCHS~\cite{low2025optimalquantumsimulationlinear}, while the sinh--sinh
map comes from the corresponding LCHS quadrature analysis
~\cite{aftab2026lchsmpf}.  Poisson summation exposes the continuous LCHS
term and every discretization alias in one exact operator formula.  It
changes neither the kernel nor the Hamiltonian simulation used in each
branch and therefore does not improve the LCHS query complexity.
For a fixed constant $c>0$, set
\begin{align}
    q(k)
    &=
    \frac{\mathrm e^{c-\mathrm i c k}}{\pi(1+k^2)}
    \exp\!\left[-\frac{k^2+1}{4\gamma^2}\right],
    \nonumber\\
    \phi(x)
    &=\int_{\mathbb R}q(k)\mathrm e^{-\mathrm ikx}\,\mathrm dk.
    \label{eq:optimal_lchs_kernel}
\end{align}
The second relation fixes the Fourier-sign convention used in the
Weyl identity below.  The entire function $\phi$ approximates
$\mathrm e^{-x}$ for $x\geq0$, but it is not equal to that function.
For noncommuting $L$ and $G$, the Weyl operator introduced below should
not be identified with the ordinary product
$\phi(TL)\mathrm e^{-\mathrm iTG}$ or with a holomorphic substitution in
$L+\mathrm iG$; it is defined by its Fourier integral along $kL+G$.
For $0<\epsilon\leq0.9$, choose
\begin{equation}
    \gamma=
    \frac{1}{c}
    \sqrt{c+\log\!\left(
    \frac{1+\frac{1}{2\pi}}{\epsilon}\right)},
    \qquad
    R=2c\gamma^2,
    \label{eq:optimal_lchs_parameters}
\end{equation}
The generalized LCHS theorem then gives
\begin{equation}
    \left\|
    \mathrm e^{-T(L+\mathrm iG)}-
    \int_{-R}^{R}q(k)
    \mathrm e^{-\mathrm iT(kL+G)}\,\mathrm dk
    \right\|
    \leq\epsilon.
    \label{eq:optimal_lchs_error}
\end{equation}
No commutativity of $L$ and $G$ is required.  The kernel also satisfies
\begin{equation}
    \int_{\mathbb R}\abs{q(k)}\,\mathrm dk
    =
    \mathrm e^c\operatorname{erfc}\!\left(\frac{1}{2\gamma}\right)
    \leq \mathrm e^c,
    \label{eq:optimal_lchs_normalization}
\end{equation}
Here $\operatorname{erfc}$ is the complementary error function, so the
continuous LCU normalization is constant when $c$ is constant.

For real $a$, the Fourier--Stieltjes Weyl calculus on the line of
Hamiltonians $kL+G$ reads
\begin{equation}
\operatorname{Op}^{\mathrm W}_{TL,TG}
\!\left[\mathrm e^{-\mathrm iy}\phi(x+a)\right]
:=
\int_{\mathbb R}q(k)\mathrm e^{-\mathrm iak}
\mathrm e^{-\mathrm iT(kL+G)}\,\mathrm dk.
\label{eq:weyl_fourier_class}
\end{equation}
Operator-valued Poisson summation now gives, for every $h>0$,
\begin{align}
&h\sum_{m\in\mathbb Z}
q(mh)\mathrm e^{-\mathrm iT(mhL+G)}
\nonumber\\
&\quad=
\operatorname{Op}^{\mathrm W}_{TL,TG}
\!\left[\mathrm e^{-\mathrm iy}\phi(x)\right]
\nonumber\\
&\qquad+
\sum_{\ell\ne0}
\operatorname{Op}^{\mathrm W}_{TL,TG}
\!\left[
\mathrm e^{-\mathrm iy}
\phi\!\left(x+\frac{2\pi\ell}{h}\right)
\right].
\label{eq:weyl_poisson_full}
\end{align}
Eq.~\eqref{eq:weyl_poisson_full} is the Weyl--Poisson formula used
here.  Its first term is exactly the continuous full-line LCHS integral,
and the remaining terms are its complete Poisson aliases.  The equality
is exact; only the identification of the first term with the target
$\mathrm e^{-T(L+\mathrm iG)}$ is approximate because $\phi$ is an
approximation to $\mathrm e^{-x}$.
The formula retains the noncommuting exponential
$\mathrm e^{-\mathrm iT(kL+G)}$.  It does not use a product formula or
replace $L$ and $G$ by commuting variables.  Truncating the left side
and isolating the target gives the exact error decomposition
\begin{align}
    &h\sum_{m=-N}^{N}
    q(mh)\mathrm e^{-\mathrm iT(mhL+G)}
    -\mathrm e^{-T(L+\mathrm iG)}
    \nonumber\\
    &=
    \left\{
    \operatorname{Op}^{\mathrm W}_{TL,TG}
    \!\left[\mathrm e^{-\mathrm iy}\phi(x)\right]
    -\mathrm e^{-T(L+\mathrm iG)}
    \right\}
    \nonumber\\
    &\quad+
    \sum_{\ell\ne0}
    \operatorname{Op}^{\mathrm W}_{TL,TG}
    \!\left[
        \mathrm e^{-\mathrm iy}
        \phi\!\left(x+\frac{2\pi\ell}{h}\right)
    \right]
    \nonumber\\
    &\quad-
    h\sum_{\abs{m}>N}
    q(mh)\mathrm e^{-\mathrm iT(mhL+G)}.
    \label{eq:weyl_poisson_exact}
\end{align}
The three terms are respectively the optimal-scaling LCHS kernel
mismatch, the complete Poisson alias, and the lattice tail.  The first term is present
because the selected optimal-scaling LCHS kernel has
$\phi\neq\mathrm e^{-x}$; replacing $\phi$ formally would correspond to
a different kernel.  Adding and subtracting the integral over $[-R,R]$
bounds its norm by the left-hand side of
Eq.~\eqref{eq:optimal_lchs_error} plus
$\int_{\abs{k}>R}\abs{q(k)}\,\mathrm dk$.  The latter is
$\Ocal(\epsilon)$ for the parameters in
Eq.~\eqref{eq:optimal_lchs_parameters}, so the full-line approximation
error is
$\Ocal(\epsilon)$.

For an overall operator error, each displayed tolerance may be replaced
by a fixed fraction of the desired error.  The triangle inequality then
combines the kernel, Poisson-alias, and lattice-tail bounds.  Such a
constant rescaling changes no asymptotic order below.

The uniform rule has $R=\Ocal(\log(1/\epsilon))$ and may take
$h^{-1}=\Ocal(T\beta_A+\log(1/\epsilon))$.  It therefore uses
\begin{equation}
    \Ocal\!\left(
        \log\frac{1}{\epsilon}
        \left[T\beta_A+\log\frac{1}{\epsilon}\right]
    \right)
    \label{eq:uniform_lchs_nodes}
\end{equation}
    nodes~\cite{low2025optimalquantumsimulationlinear,
    trefethen2014exponentially}.  The existing sinh--sinh quadrature can
    be combined with the same kernel~\cite{mori2005discovery,
    ooura2005double,sugihara1997optimality,okayama2022double,
    aftab2026lchsmpf}.  We use its unit-scale form $k=\sinh z$; inserting
a fixed positive scale changes only constants in the bounds below.
Poisson summation in $z$ gives the equally explicit compatibility
identity
\begin{align}
    &h\sum_{m\in\mathbb Z}
    \cosh(mh)q(\sinh(mh))
    \mathrm e^{-\mathrm iT(\sinh(mh)L+G)}
    \nonumber\\
    &=
    \int_{\mathbb R}q(k)
    \mathrm e^{-\mathrm iT(kL+G)}\,\mathrm dk
    \nonumber\\
    &\quad+
    \sum_{\ell\ne0}
    \int_{\mathbb R}
    \cosh z\,q(\sinh z)
    \mathrm e^{-\mathrm iT(\sinh zL+G)}
    \mathrm e^{-\frac{2\pi\mathrm i\ell z}{h}}\,\mathrm dz.
    \label{eq:sinh_weyl_poisson}
\end{align}
The first term on the right is the continuous LCHS integral with the
optimal-scaling kernel; the second contains every alias created by the
transformed lattice.
Thus the sinh--sinh change of variables is compatible with the same
Weyl--Poisson decomposition.  On the
real axis, the transformed kernel satisfies
\begin{equation}
    \cosh x\,\abs{q(\sinh x)}
    =
    \frac{\mathrm e^{c-\frac{1}{4\gamma^2}}}{\pi\cosh x}
    \exp\!\left[-\frac{\sinh^2x}{4\gamma^2}\right].
    \label{eq:sinh_optimal_kernel_decay}
\end{equation}
Its tail is double exponential in the transformed cutoff.  The mesh
must nevertheless resolve the growth of the complex-time Hamiltonian
on the displaced contour.  For fixed positive $T\beta_A$, or as this
physical scale grows, Appendix~\ref{app:weyl_poisson} gives
\begin{align}
    &2N+1=\Ocal\!\left(
        T\beta_A
        \log\frac{1}{\epsilon}
        \log\log\frac{1}{\epsilon}
    \right),
    \nonumber\\
    &\qquad \max_{\abs{m}\leq N}\abs{\sinh(mh)}
    =\Ocal\!\left(\log\frac{1}{\epsilon}\right).
    \label{eq:sinh_lchs_nodes_radius}
\end{align}
At fixed positive $T\beta_A$, the first bound has
$\Ocal(\log(1/\epsilon)\log\log(1/\epsilon))$ precision dependence.
As the supplied physical scale grows, however, the branch count retains
a linear $T\beta_A$ factor.  Thus the transformation
can reduce the precision-dominated branch count, but neither the largest
simulated frequency nor its physical-scale dependence.  Since no specific
state-preparation and controlled-selection (PREPARE/SELECT)
architecture is fixed here, this node reduction
is not claimed as an end-to-end gate-complexity improvement.
If a sharper upper bound on $\norm{L}$ is known separately, it may
replace $\beta_A$ when choosing the quadrature mesh.
Eq.~\eqref{eq:sinh_lchs_nodes_radius} is an upper bound for this
quadrature rule in the stated regime, not a lower bound on every
possible quadrature.

In contrast, the discrete LCU coefficient one-norm is uniformly
independent of $\beta_A$.  For $h\leq1$, the explicit estimate is
\begin{equation}
    h\sum_{m=-N}^{N}
    \cosh(mh)\abs{q(\sinh(mh))}
    \leq \mathrm e^c\left(1+\frac{1}{\pi}\right).
    \label{eq:sinh_lchs_normalization_bound}
\end{equation}
The right-hand side is also independent of $T$, $\epsilon$, and $N$.
The coefficient normalization and the number of coherent branches are
therefore distinct resources: only the former is norm independent.

For every sampled frequency, Eq.~\eqref{eq:cartesian_block_encodings}
gives a block encoding of
\begin{equation}
    \frac{kL+G}{\beta_A(\abs{k}+1)}.
    \label{eq:lchs_branch_normalization}
\end{equation}
The largest frequency is $R=\Ocal(\log(1/\epsilon))$.  Hamiltonian
simulation of the controlled branches and the constant LCU
normalization therefore give, for fixed positive $T\beta_A$ or growing
physical scale and under the resource convention stated at the
beginning of Sec.~\ref{sec:poisson_qsvt},
\begin{equation}
    \Ocal\!\left(
        T\beta_A\log\frac{1}{\epsilon}
    \right)
    \quad\text{queries to }U_A\text{ and }U_A^\dagger
    \label{eq:optimal_lchs_query_complexity}
\end{equation}
for operator-norm error at most $\epsilon$
~\cite{PhysRevLett.114.090502,
low2017hamiltoniansimulationuniformspectral,
low2025optimalquantumsimulationlinear}.  This is the optimal LCHS query
order expressed through the supplied normalization $\beta_A$.  Neither
the Weyl rewriting nor the sinh--sinh map changes this query order.

Preparing the normalized state with Euclidean error at most $\epsilon$
requires
\begin{equation}
    \Ocal\!\left(
        u_rT\beta_A\log\frac{u_r}{\epsilon}
    \right)
    \label{eq:optimal_lchs_state_complexity}
\end{equation}
queries to $U_A$ and $U_A^\dagger$, together with $\Ocal(u_r)$ calls
to the initial-state oracle.  The corresponding number of branches in
each LCHS block is
\begin{equation}
    \Ocal\!\left(
        T\beta_A
        \log\frac{u_r}{\epsilon}
        \log\log\frac{u_r}{\epsilon}
    \right).
    \label{eq:sinh_lchs_state_branches}
\end{equation}
At fixed $T\beta_A$ this again reduces to the precision-only
$\log\log$ compression.  The reduction concerns the node-dependent
    PREPARE/SELECT
description; it is not, by itself, a query or gate-complexity speedup.
It does not remove the constant LCU normalization or improve
Eq.~\eqref{eq:optimal_lchs_state_complexity}.  The Hermitian result of
Sec.~\ref{sec:poisson_qsvt} remains the case in which the frequency LCU
is eliminated entirely.

\subsection{Amplitude--phase separation}
\label{subsec:aps}

Amplitude--phase separation (APS) separates the dissipative factor from
the remaining interaction in non-unitary simulation.  This viewpoint is
also relevant to quantum algorithms for open-system dynamics
~\cite{RevModPhys.93.015008,Liu2025simulationofopen,
delgado2025quantum}.  We consider the time-independent generator
\begin{equation}
    A=H+\mathrm iG,
    \qquad H=H^\dagger\succeq0,
    \qquad G=G^\dagger,
    \label{eq:aps_generator}
\end{equation}
Let $\beta_H\geq\norm{H}$ and $\beta_G\geq\norm{G}$.  We assume a
block encoding of $G/\beta_G$ and, for a fixed positive integer
$\alpha$, an exact half-root oracle whose designated block is
\begin{equation}
    \left(\frac{H}{\beta_H}\right)^{\frac{1}{2\alpha}}.
    \label{eq:aps_half_root_access}
\end{equation}
Applying the degree-two QSVT polynomial $2x^2-1$ also gives the
root-shift signal
\begin{equation}
    2\left(\frac{H}{\beta_H}\right)^{\frac{1}{\alpha}}-I
    =
    \frac{2H^{\frac{1}{\alpha}}}{\beta_H^{\frac{1}{\alpha}}}-I,
    \label{eq:aps_root_signal}
\end{equation}
with two half-root queries and no approximation error
~\cite{Gilyen2019qsvt}.  Here $\alpha$ labels the order of the available
root access; it does not change the physical generator $H$ or the target
propagator $\mathrm e^{-TH}$.  The fractional powers are defined by the
spectral calculus.  For $\alpha=1$,
Eq.~\eqref{eq:aps_half_root_access} is the square-root access of
APS~\cite{hu2026quantumsimulationnonunitarydynamics}; for a fixed integer
$\alpha>1$ it is the corresponding root of order $2\alpha$.  Such access
is not implied by a generic block encoding of $H$.

APS gives two exact interaction-picture factorizations
~\cite{hu2026quantumsimulationnonunitarydynamics}.  Separating the phase
first gives
\begin{equation}
    \mathrm e^{-TA}
    =
    \mathrm e^{-\mathrm iTG}
    \mathcal T
    \exp\!\left[
        -\int_0^T
        \mathrm e^{\mathrm isG}H
        \mathrm e^{-\mathrm isG}\,\mathrm ds
    \right],
    \label{eq:phase_driven_aps}
\end{equation}
whereas separating the amplitude first gives
\begin{equation}
    \mathrm e^{-TA}
    =
    \mathrm e^{-T H}
    \mathcal T
    \exp\!\left[
        -\mathrm i\int_0^T
        \mathrm e^{sH}G
        \mathrm e^{-sH}\,\mathrm ds
    \right].
    \label{eq:amplitude_driven_aps}
\end{equation}
No commutativity between $H$ and $G$ is used in either equality.  The
similarity-transformed generator
$\mathrm e^{sH}G\mathrm e^{-sH}$ need not be Hermitian, so the second
factor in Eq.~\eqref{eq:amplitude_driven_aps} is not generally unitary.

The reusable structure in Eq.~\eqref{eq:amplitude_driven_aps} is the
time-independent contraction semigroup
\begin{equation}
    \mathrm e^{-sH}\mathrm e^{-tH}
    =\mathrm e^{-(s+t)H},
    \qquad s,t\geq0.
    \label{eq:aps_semigroup}
\end{equation}
After expanding the interaction factor, every dissipative interval is
nonnegative and the intervals in each Dyson product sum to $T$.  The
algorithm therefore requires one uniformly controlled semigroup family,
rather than unrelated approximations at each time.  Since
$H=(H^{\frac{1}{2\alpha}})^{2\alpha}$,
Eq.~\eqref{eq:aps_half_root_access} is the standard positive signal for
an even power $2\alpha$, with base normalization
$\beta_H^{\frac{1}{2\alpha}}$.  The Poisson-compiled polynomial
therefore provides an ordinary even-QSVT realization of this factor.
The amplitude-driven APS algorithm also queries the
controlled family
\begin{equation}
    \sum_m\ket{m}\!\bra{m}\otimes
    \mathrm e^{-t_mH},
    \qquad 0\leq t_m\leq T.
    \label{eq:aps_controlled_family}
\end{equation}
This family has a direct Poisson-compiled realization.  For each $t_m$,
synthesize the corresponding even-QSVT sequence and choose one common
maximum query length.  Shorter sequences are padded in steps of two.
Replacing
each phase rotation by its multiplexed version, controlled by $\ket m$,
gives one block-diagonal circuit whose $m$th block implements
$\mathrm e^{-t_mH}$.  The signal calls are common to all branches; only
the phase angles depend on $m$.  Appendix~\ref{app:aps} proves this
direct-sum construction and its uniform error bound
~\cite{Gilyen2019qsvt}.
For a per-call error $\delta$, the scalar polynomial is computed with
smaller error and rescaled only by $1+\delta/2$.  Because it is even and
bounded by one, the real-polynomial QSVT theorem implements it with
normalization one~\cite{Gilyen2019qsvt}.  Each designated block is
therefore a normalization-one approximation to
$\mathrm e^{-t_mH}$.  The fixed normalization two used for standalone
state preparation in Theorem~\ref{thm:poisson_compilation} is not
repeated inside the APS products.  Direct root-shift access remains
sufficient for the standalone shifted-oracle results.  For repeated APS
products, half-root access avoids the degree-dependent normalization of
a mixed-parity generalized eigenvalue transformation.

We use Eq.~\eqref{eq:aps_controlled_family} as the dissipative-family
oracle in the APS construction
~\cite{hu2026quantumsimulationnonunitarydynamics}.  One realization of
this family has a maximum QSVT degree, while the truncated Dyson
construction invokes the family and the block encoding of $G/\beta_G$
multiple times.  We give the query count both when the controlled
family is provided directly and when it is implemented from the
half-root oracle.  In the latter access model, every family invocation
executes the corresponding QSVT sequence.
The case $G=0$ reduces to the standalone dissipative transformation.

\begin{corollary}[Poisson-compiled amplitude-driven APS]
\label{cor:aps_complexity}
Let $0<\epsilon<1/2$ and $\beta_G T\geq1$.  Under the preceding access
assumptions, one invocation of the controlled family in
Eq.~\eqref{eq:aps_controlled_family}, with the uniform accuracy required
by the full APS circuit, can be realized using
\begin{equation}
\Ocal\!\left(
 \left(
 T\beta_H+\log\frac{u_r\beta_G T}{\epsilon}
 \right)^{\frac{1}{2\alpha}}
 \log^{1-\frac{1}{2\alpha}}\!\frac{u_r\beta_G T}{\epsilon}
\right)
\label{eq:aps_family_degree}
\end{equation}
half-root queries.  To prepare a normalized state proportional to
$\mathrm e^{-T(H+\mathrm iG)}u_0$ with Euclidean error at most
$\epsilon$ and constant success probability, amplitude-driven APS uses
\begin{equation}
\Ocal\!\left(
u_r\beta_G T
\frac{\log(u_r/\epsilon)}{\log\log(u_r/\epsilon)}
\right)
\label{eq:aps_family_calls}
\end{equation}
invocations of the controlled family and the same order of calls to
the block encoding of $G/\beta_G$.  Consequently, when the controlled
family is expanded into its QSVT realization, the primitive query
counts are
\begin{align}
\mathcal Q_{\mathrm{half}}
={}&\Ocal\!\Bigg(
u_r\beta_G T
\frac{\log(u_r/\epsilon)}{\log\log(u_r/\epsilon)}
\nonumber\\[-1mm]
&\quad\times
\left(
T\beta_H+\log\frac{u_r\beta_G T}{\epsilon}
\right)^{\frac{1}{2\alpha}}
\log^{1-\frac{1}{2\alpha}}\!\frac{u_r\beta_G T}{\epsilon}
\Bigg),
\label{eq:aps_flat_half_root_queries}\\
\mathcal Q_G
={}&\Ocal\!\left(
u_r\beta_G T
\frac{\log(u_r/\epsilon)}{\log\log(u_r/\epsilon)}
\right).
\label{eq:aps_flat_g_queries}
\end{align}
The algorithm also uses $\Ocal(u_r)$ calls to the initial-state oracle.
\end{corollary}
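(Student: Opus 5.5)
The proof has three parts: the cost of one family call, the number of calls made by the APS algorithm, and their product.

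\textbf{One family call.} Since $H=(H^{\frac{1}{2\alpha}})^{2\alpha}$, the half-root oracle is a standard positive signal $X:=(H/\beta_H)^{\frac{1}{2\alpha}}$. Its norm is at most one, and the even exponent $2\alpha$ corresponds to base normalization $\beta_H^{\frac{1}{2\alpha}}$. For each $t_m\in[0,T]$ we apply Theorem~\ref{thm:poisson_compilation} and Theorem~\ref{thm:degree} to the even-integer target $\mathrm e^{-t_m\beta_H\abs{x}^{2\alpha}}$, which equals $\mathrm e^{-t_mH}$ on the spectrum of $X$.

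\begin{itemize}
\item The even-integer bound, Eq.~\eqref{eq:even_degree}, is used with exponent $2\alpha$ and with $T\norm{H}^\alpha$ replaced by $t_m\beta_H$.
\item Because that bound is monotone in the time parameter, the choice $t_m=T$ gives one common degree for every branch.
\item Sequences of smaller degree are padded in steps of two, and each phase rotation is multiplexed over $\ket m$, as in Appendix~\ref{app:aps}.
\item The scalar polynomial is made accurate to $\delta/2$ and divided by $1+\delta/2$. It is then even and bounded by one, so the real-polynomial QSVT theorem implements it with normalization one and per-branch error $\delta$.
\end{itemize}

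Setting $\delta$ equal to the per-call accuracy required below, $\delta=\Theta(\epsilon/(u_r\beta_GT))$ up to the $\log/\log\log$ factor absorbed inside the logarithm, gives Eq.~\eqref{eq:aps_family_degree}. In the fast-forwarding regime this also matches Eq.~\eqref{eq:shifted_integer_degree}, now reached through the exact root-shift signal of Eq.~\eqref{eq:aps_root_signal}. The even route is the one we use, since it avoids mixed parity.

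\textbf{Number of family calls.} We take the amplitude-driven factorization, Eq.~\eqref{eq:amplitude_driven_aps}, and invoke the truncated-Dyson APS construction of~\cite{hu2026quantumsimulationnonunitarydynamics} as a black box.

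\begin{itemize}
\item The interaction term has strength $\beta_G$ over total time $T$.
\item Segmenting into $\Ocal(\beta_G T)$ unit-strength pieces and truncating each Dyson series at order $K=\Ocal(\log(u_r/\epsilon)/\log\log(u_r/\epsilon))$ gives $\Ocal(\beta_GT\,K)$ uses of the controlled family and of the $G/\beta_G$ block encoding per block encoding of the propagator.
\item Each dissipative interval is nonnegative and the intervals in each product sum to $T$. The semigroup law, Eq.~\eqref{eq:aps_semigroup}, therefore lets every factor be drawn from the single family Eq.~\eqref{eq:aps_controlled_family} with $0\le t_m\le T$.
\item Fixed-point amplitude amplification contributes a factor $u_r$ to all oracle counts, and $\Ocal(u_r)$ calls to the initial-state oracle. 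This gives Eq.~\eqref{eq:aps_family_calls} and Eq.~\eqref{eq:aps_flat_g_queries}.
\end{itemize}

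\textbf{Error allocation (main obstacle).} The inner family error must be controlled inside long products. We use a telescoping bound: every factor, exact and approximate, has normalization-one block norm at most one, because the targets are contractions and the approximations are rescaled by $1+\delta/2$. Hence the total block error is at most the number of family invocations per block, $\Ocal(\beta_GTK)$, times $\delta$.

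Converting block error to normalized-state error introduces a factor $u_r$. So it suffices to take
\[
\delta=\Theta\!\left(\frac{\epsilon}{u_r\beta_GTK}\right).
\]
The hypothesis $\beta_GT\ge1$ ensures $\log(1/\delta)=\Ocal(\log(u_r\beta_GT/\epsilon))$, with $\log K$ absorbed. This is the argument of the logarithm in Eq.~\eqref{eq:aps_family_degree}. One check is needed: the Dyson coefficient one-norm must remain $\Ocal(1)$ per segment, so that the normalization-one blocks keep the amplification factor at $u_r$.

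\textbf{Primitive half-root count.} Multiplying the number of family invocations by the degree per invocation gives Eq.~\eqref{eq:aps_flat_half_root_queries}. This holds because every invocation executes the full QSVT sequence once.
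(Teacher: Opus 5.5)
Your proposal is correct and follows essentially the same route as the paper's proof in Appendix~\ref{app:aps}. Both arguments use the even-exponent-$2\alpha$ Poisson polynomial with base normalization $\beta_H^{\frac{1}{2\alpha}}$, rescale it by $1+\delta/2$ to obtain a normalization-one block, pad every branch to the degree required at $t_m=T$, and multiplex the phase rotations over the time register. Both then combine this with the black-box APS family-call count, the factor $u_r$ from amplitude amplification, and the telescoped per-call tolerance $\delta=\Theta(\epsilon/(u_r\beta_GTK))$.

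One small correction to your side remark: an $\Ocal(1)$ Dyson one-norm per segment would not by itself keep the amplification factor at $u_r$. Such normalizations would compound across the $\Theta(\beta_GT)$ segments. As in the paper, this point is instead covered by taking the family-call count and its $u_r$ scaling directly from the cited APS construction.
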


The factorization and error allocation behind this corollary are given
in Appendix~\ref{app:aps}.  Telescoping controls every controlled-family
invocation and the amplitude-amplification steps.  The lower-order
iterated logarithms introduced by this allocation are absorbed into
$\log(u_r\beta_G T/\epsilon)$.

If
$T\beta_H\geq\log(u_r\beta_G T/\epsilon)$, then for $\alpha=1$
the per-invocation bound in Eq.~\eqref{eq:aps_family_degree} reduces to
\begin{equation}
\Ocal\!\left(
\sqrt{T\beta_H\log\frac{u_r\beta_G T}{\epsilon}}
\right).
\label{eq:aps_square_root_reduction}
\end{equation}
This reproduces the square-root dependence of the dissipative family in
the amplitude-driven APS framework
~\cite{hu2026quantumsimulationnonunitarydynamics}, using the
Poisson-compiled QSVT realization.  Eq.~\eqref{eq:aps_family_calls} is the separate interaction-picture cost.
Combining Eqs.~\eqref{eq:aps_square_root_reduction}
and~\eqref{eq:aps_family_calls} gives the elementary half-root query
count in Eq.~\eqref{eq:aps_flat_half_root_queries}.  At fixed $T\beta_H$,
$\beta_G T$, and $u_r$, the sharper interpolation formula in
Eq.~\eqref{eq:shifted_integer_degree_interpolation} improves the
per-invocation high-precision factor to
$\log(u_r\beta_G T/\epsilon)/\log\log(u_r\beta_G T/\epsilon)$.

In a discretized phase-driven APS
algorithm, each conjugated dissipative step has the form
\begin{equation}
    \mathrm e^{-\Delta t\,
    \mathrm e^{\mathrm isG}H\mathrm e^{-\mathrm isG}}
    =
    \mathrm e^{\mathrm isG}
    \mathrm e^{-\Delta t H}
    \mathrm e^{-\mathrm isG},
    \label{eq:aps_conjugated_step}
\end{equation}
so the same polynomial-transform primitive applies whenever controlled
variable-duration calls are available.

At fixed precision, the dissipative term in
Eq.~\eqref{eq:aps_family_degree} scales as
$(T\beta_H)^{\frac{1}{2\alpha}}$.  The case $\alpha=1$ reproduces the
square-root APS scaling, whereas every fixed integer $\alpha>1$ gives
beyond-square-root dependence on the same physical scale $T\beta_H$.
This statement concerns the realization of each controlled dissipative
family call; combining it with the family-call count gives
Eq.~\eqref{eq:aps_flat_half_root_queries}.  Fourier kernels beyond the
Gaussian case may change sign, but this does not obstruct the present
construction: their samples are summed classically into one QSVT
polynomial and are never used as probabilities or coherent LCU
coefficients.  The conclusion is conditional on the half-root access in
Eq.~\eqref{eq:aps_half_root_access}.  The degree-two relation in
Eq.~\eqref{eq:aps_root_signal} shows how this access also supplies the
root-shift signal.  This is consistent with general
limitations on fast-forwarding unstructured linear differential
equations~\cite{An2025qq}.

\section{Discussion and Conclusion}
\label{sec:conclusion}

Poisson summation is used here as a classical compiler for
$\mathrm e^{-T H^\alpha}$.  Its exact target--alias--tail identity
controls the sampled coefficients, which are combined before one
polynomial-transform circuit is synthesized.  The quantum success
amplitude therefore depends on the input-to-output ratio $u_r$, but not
on a Fourier coefficient one-norm.

The encoded signal determines the approximation geometry.  The direct
even extension under $H/\norm{H}$ is entire for even positive integers
and has an interior algebraic singularity otherwise.  Under
$2H/\norm{H}-I$, the exact quadratic lift makes every positive integer
entire and doubles the fixed-scale approximation exponent for
noninteger powers.  The matching bounds and their scope are summarized
in Table~\ref{tab:degree_regimes}.  Standard finite-difference
Laplacians provide a concrete unit-normalized realization of the
shifted oracle.

In the time-independent non-Hermitian application, the Fourier--Weyl
representation of the entire scalar surrogate leads to an exact
operator-valued Poisson alias identity.  The Fourier representation
defines the noncommuting operator lift, while strip analyticity and
decay of its kernel control the contour and alias estimates.  This
construction remains compatible with
sinh--sinh discretization but changes neither the kernel, the LCU
normalization, nor the optimal LCHS query order.  The sinh--sinh map can
lower the precision-dominated node count, but no end-to-end query or
gate-complexity speedup is claimed.  The LCU coefficient one-norm is
independent of $\beta_A$, while the uniform branch count retains its
$T\beta_A$ dependence.

The APS extension uses a different structure: the semigroup law in
Eq.~\eqref{eq:aps_semigroup} makes one uniformly controlled family of
dissipative contractions reusable throughout the Dyson products.  Its
implementation is based on the half-root access in
Eq.~\eqref{eq:aps_half_root_access}, whose degree-two QSVT image is the
root-shift signal in Eq.~\eqref{eq:aps_root_signal}.  The construction
recovers the square-root degree when $\alpha=1$; combining the
per-invocation degree with the interaction-picture family calls gives
the elementary half-root query count.

The complexity results count block-encoding queries.  Kernel
evaluation, Poisson sampling, and QSP phase synthesis are classical
preprocessing costs and are not identified with fault-tolerant gate
complexity.  The present analysis specifies their sample counts and
required numerical precision, but not an optimized bit complexity for
general $\alpha$.  Open approximation questions include uniform joint
time--precision bounds for general non-even powers.  The non-even lower
bound under standard access is complete for ordinary single-sequence
QSVT, but does not extend to generalized indefinite-parity
transformations.


\bibliography{ref}

\clearpage
\onecolumngrid
\appendix
\section{Fourier-kernel estimates and Poisson truncation}
\label{app:kernel}

We first establish the Fourier decay and truncation bounds underlying
the Poisson-compiled polynomial.  Even powers have a
stretched-exponential Fourier tail, whereas every non-even power
inherits an algebraic tail from the singularity at the origin.  The
same estimates also justify the absolute convergence needed for the
exact Poisson identity.

\begin{lemma}[Poisson residual]
\label{lem:poisson_residual}
For every $\alpha,T>0$, the function
$\mathrm e^{-T\norm{H}^{\alpha}\abs{x}^{\alpha}}$ and the kernel
$f_{\alpha,T}$ are integrable and continuous.
Eqs.~\eqref{eq:poisson_exact} and
\eqref{eq:poisson_alias_tail} hold for every $h>0$; both infinite
series in Eq.~\eqref{eq:poisson_exact} converge absolutely and
uniformly for $x$ in a compact interval.
\end{lemma}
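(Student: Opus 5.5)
Write $a:=T\norm{H}^{\alpha}>0$ and $g(x):=\mathrm e^{-a\abs{x}^{\alpha}}$, so that $f_{\alpha,T}$ is the inverse Fourier transform of $g$. The function $g$ is continuous, bounded by one, and decays faster than any power, so it is integrable. The plan is to show that $f_{\alpha,T}$ is integrable with enough decay that its lattice samples are summable. Given that, the sampled series is absolutely summable and its sum is the Fourier series of the periodization $\sum_n g(x+n/h)$. We then identify the two sides by uniqueness of Fourier coefficients of continuous periodic functions. Finally, Eq.~\eqref{eq:poisson_alias_tail} is just algebra: truncate the left side of Eq.~\eqref{eq:poisson_exact} and move the $n=0$ term across.

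The key step is the decay of $f_{\alpha,T}$. Continuity and boundedness, $\abs{f_{\alpha,T}}\le\int g$, follow from dominated convergence. For decay, split $g=g\chi+g(1-\chi)$ with a smooth cutoff $\chi$ equal to one near the origin.
\begin{itemize}
\item Away from the origin, $g(1-\chi)$ is Schwartz, so its Fourier transform decays rapidly.
\item Near the origin, write $g\chi=\chi\bigl(1-a\abs{\xi}^{\alpha}\bigr)+\chi\cdot r(\xi)$, where the remainder $r$ is a smooth function of $\xi$ plus terms of the form $\abs{\xi}^{k\alpha}$ times smooth functions.
\end{itemize}
The local homogeneous pieces are handled by a single argument: integrate by parts twice, or use the standard Fourier transform of $\chi\abs{\xi}^{\beta}$ for $\beta>0$. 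This gives $\Ocal(\abs{y}^{-\beta-1})$, or rapid decay when $\beta$ is an even integer. It follows that $\abs{f_{\alpha,T}(y)}\le C(1+\abs y)^{-1-\min(\alpha,2)}$, which is integrable and lattice-summable for every $h>0$.

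An even simpler route, which I would try first, avoids the cutoff. Since $g$ is even, $f_{\alpha,T}(y)=2\int_0^\infty g(\xi)\cos(2\pi y\xi)\,\mathrm d\xi$. Integrating by parts once gives $-(\pi y)^{-1}\int_0^\infty g'(\xi)\sin(2\pi y\xi)\,\mathrm d\xi$. The function $g'$ is integrable and has a single singularity at $0$, where it behaves like $\xi^{\alpha-1}$. A second integration by parts, done on $(0,\delta)$ and $(\delta,\infty)$ separately, bounds the remaining integral by $\Ocal(\abs y^{-\min(\alpha,1)})$, up to a harmless logarithm at $\alpha=1$. This yields $\abs{f_{\alpha,T}(y)}=\Ocal(\abs y^{-1-\min(\alpha,1)})$, which already suffices.

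For the right side of Eq.~\eqref{eq:poisson_exact}, the terms on a compact $x$-interval are bounded by $\mathrm e^{-a(\abs{n}/h-c)^{\alpha}}$ for large $\abs n$. The Weierstrass M-test then gives absolute and uniform convergence, so the periodization $F$ is continuous and $1/h$-periodic. Its Fourier coefficients are computed by unfolding the sum, which is justified by Fubini and the integrability of $g$:
\begin{equation*}
h\int_0^{1/h}F(x)\,\mathrm e^{2\pi\mathrm i khx}\,\mathrm dx
= h\int_{\mathbb R}g(x)\,\mathrm e^{2\pi\mathrm i khx}\,\mathrm dx
= h f_{\alpha,T}(kh).
\end{equation*}
For the left side, the summability $\sum_k\abs{f_{\alpha,T}(kh)}<\infty$ comes from the decay bound. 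By the M-test again, the left side converges absolutely and uniformly to a continuous $1/h$-periodic function with the same coefficients. Uniqueness of Fourier coefficients for continuous periodic functions then gives equality.

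The main obstacle is the uniform decay estimate for $f_{\alpha,T}$ at small $\alpha$, specifically handling the $\xi^{\alpha-1}$ singularity of $g'$ cleanly. The split-integral integration by parts described above is the step I would verify most carefully. The sharper asymptotics in Eqs.~\eqref{eq:even_kernel_decay} and \eqref{eq:noneven_kernel_asymptotic} are not needed here.
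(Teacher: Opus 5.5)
Your proposal is correct and follows the paper's proof. You periodize the target with period $1/h$, unfold to identify its Fourier coefficients as $hf_{\alpha,T}(kh)$, use absolute summability of the samples to get a uniformly convergent Fourier series, conclude by uniqueness of Fourier coefficients, and obtain Eq.~\eqref{eq:poisson_alias_tail} by rearrangement. The only difference is that you prove the needed summability yourself, via an integration-by-parts bound $\abs{f_{\alpha,T}(y)}=\Ocal(\abs{y}^{-1-\min(\alpha,1)})$, instead of citing the sharper Lemma~\ref{lem:kernel_decay}; this weaker bound suffices here, and no logarithm actually appears at $\alpha=1$, since the $\xi^{\alpha-2}$ term of $g''$ carries the factor $\alpha-1$.
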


\begin{proof}
Write the right-hand side of Eq.~\eqref{eq:poisson_exact} as the
periodization of the target with period $1/h$.  Exponential decay of
the target makes this periodization continuous, and its defining
series converges uniformly on every compact interval.  Its $k$th
Fourier coefficient on any interval of length $1/h$ is
\begin{align*}
h\int_{-\frac{1}{2h}}^{\frac{1}{2h}}
\sum_{n\in\mathbb Z}
\mathrm e^{-T\norm{H}^{\alpha}\abs{x+n/h}^{\alpha}}
\mathrm e^{2\pi\mathrm i khx}\,\mathrm dx
=
h\int_{\mathbb R}
\mathrm e^{-T\norm{H}^{\alpha}\abs{x}^{\alpha}}
\mathrm e^{2\pi\mathrm i khx}\,\mathrm dx
=h f_{\alpha,T}(kh).
\end{align*}
The interchange of sum and integral follows from absolute
convergence.  Lemma~\ref{lem:kernel_decay} implies
$\sum_k\abs{f_{\alpha,T}(kh)}<\infty$ for every $h>0$.  Hence the
Fourier series with these coefficients converges absolutely and
uniformly to the periodization, proving
Eq.~\eqref{eq:poisson_exact} without a distributional limiting
argument~\cite{katznelson2004introduction}.  Isolating the $n=0$ term
and the terms $\abs{k}\leq N$ proves
Eq.~\eqref{eq:poisson_alias_tail}.
\end{proof}

\begin{lemma}[Kernel decay]
\label{lem:kernel_decay}
If $\alpha$ is an even positive integer, there are positive constants
$c_\alpha,C_\alpha$ such that
Eq.~\eqref{eq:even_kernel_decay} holds.  If $\alpha$ is not an even
integer, Eq.~\eqref{eq:noneven_kernel_asymptotic} holds and, for
$\abs{y}\geq C_\alpha T^{\frac{1}{\alpha}}\norm{H}$,
\begin{equation}
    \abs{f_{\alpha,T}(y)}
    \leq
    C_\alpha
    \frac{T\norm{H}^{\alpha}}{\abs{y}^{\alpha+1}}.
    \label{eq:noneven_kernel_bound}
\end{equation}
\end{lemma}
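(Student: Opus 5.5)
First, I would reduce to a unit-scale kernel by scaling.  Put $\lambda:=T^{\frac{1}{\alpha}}\norm{H}$, so that $T\norm{H}^{\alpha}\abs{\xi}^{\alpha}=\abs{\lambda\xi}^{\alpha}$.  Substituting $\eta=\lambda\xi$ in Eq.~\eqref{eq:inverse_fourier_kernel} gives
\[
f_{\alpha,T}(y)=\lambda^{-1}g_\alpha(y/\lambda),
\qquad
g_\alpha(z):=\int_{\mathbb R}\mathrm e^{-\abs{\eta}^{\alpha}}\mathrm e^{2\pi\mathrm i z\eta}\,\mathrm d\eta .
\]
Both Eq.~\eqref{eq:even_kernel_decay} and Eq.~\eqref{eq:noneven_kernel_bound} are invariant under this scaling, since $T\norm{H}^{\alpha}/\abs{y}^{\alpha+1}=\lambda^{-1}\abs{y/\lambda}^{-\alpha-1}$.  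It therefore suffices to prove $\abs{g_\alpha(z)}\le C_\alpha\exp[-c_\alpha\abs{z}^{\alpha/(\alpha-1)}]$ in the even case, and $\abs{g_\alpha(z)}\le C_\alpha\abs{z}^{-\alpha-1}$ for $\abs{z}\ge C_\alpha$ otherwise, together with the leading asymptotic term.

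\textbf{Even case.}  When $\alpha=2m$, the integrand $\mathrm e^{-\eta^{2m}}$ is entire.  For $z>0$ (the case $z<0$ follows by evenness) I shift the contour to $\eta\mapsto\eta+\mathrm i\tau$ with $\tau>0$.  The vertical segments vanish because $\mathrm{Re}\,(\eta+\mathrm i\tau)^{2m}\ge\eta^{2m}/2-C_m\tau^{2m}$ for large $\abs{\eta}$.  This yields
\[
\abs{g_\alpha(z)}\le \mathrm e^{-2\pi z\tau}\int\exp[-\mathrm{Re}(\eta+\mathrm i\tau)^{2m}]\,\mathrm d\eta .
\]
Expanding binomially and applying Young's inequality to the cross terms gives $\mathrm{Re}(\eta+\mathrm i\tau)^{2m}\ge \tfrac12\eta^{2m}-C_m\tau^{2m}$.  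Hence $\abs{g_\alpha(z)}\le C\exp(-2\pi z\tau+C_m\tau^{2m})$.  Optimizing with $\tau\asymp z^{1/(2m-1)}$ produces $\exp(-c\,z^{2m/(2m-1)})$, which is the claimed bound.  The only care needed is the uniform lower bound on the real part, and Young's inequality supplies it.

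\textbf{Non-even case.}  Here I use a Watson-type argument.  Write $g_\alpha(z)=2\int_0^\infty\mathrm e^{-\eta^\alpha}\cos(2\pi z\eta)\,\mathrm d\eta$ and split $\mathrm e^{-\eta^\alpha}$ with a smooth cutoff $\chi$ equal to one near $0$:
\[
\mathrm e^{-\eta^\alpha}=\chi(\eta)\Bigl(1-\eta^\alpha+r(\eta)\Bigr)+(1-\chi(\eta))\mathrm e^{-\eta^\alpha}.
\]
The pieces contribute as follows.
\begin{itemize}
\item The smooth, rapidly decaying pieces ($\chi$ itself, and $(1-\chi)\mathrm e^{-\eta^\alpha}$, which is $C^\infty$ on $[0,\infty)$ and vanishes near $0$) give only rapidly decaying Fourier contributions.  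For $\chi$ this needs a little care: its cosine transform on the half-line is the full-line Fourier transform of the even extension, which is Schwartz.
\item The term $-\chi\eta^\alpha$ supplies the leading asymptotic.  Using the classical formula $\int_0^\infty\eta^{\alpha}\cos(\omega\eta)\,\mathrm d\eta=-\Gamma(\alpha+1)\sin(\pi\alpha/2)\,\omega^{-\alpha-1}$ (interpreted via Abel regularization, with the cutoff correction being smooth), I get $2\Gamma(\alpha+1)\sin(\pi\alpha/2)(2\pi z)^{-\alpha-1}$.  Undoing the scaling gives exactly Eq.~\eqref{eq:noneven_kernel_asymptotic}.
\item The remainder $r(\eta)=O(\eta^{2\alpha})$, together with its successive derivatives, gives $o(z^{-\alpha-1})$.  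I would show this by integrating by parts $\lceil\alpha\rceil+1$ times, or by expanding $r=\eta^{2\alpha}/2-\dots$ to enough terms that each term $\eta^{k\alpha}$ contributes $O(z^{-k\alpha-1})$, with the final remainder in $C^{N}$.
\end{itemize}
The uniform bound for $\abs{z}\ge C_\alpha$ then follows from the asymptotic: the remainder is at most half the leading size, or, when $\sin(\pi\alpha/2)$ is small, I simply use $C_\alpha\abs{z}^{-\alpha-1}$ with the $o$-term bounded explicitly.

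\textbf{Main obstacle.}  I expect the hardest step to be the rigorous Watson step: justifying the oscillatory integral of the non-integrable $\eta^\alpha$ and bounding the remainder uniformly.  For this I would first try a contour rotation $\eta=\mathrm e^{\mathrm i\pi/(2\alpha')}t$, or a damping factor $\mathrm e^{-\delta\eta}$ with $\delta\to0$, to make each term absolutely convergent.
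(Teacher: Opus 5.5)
Your proposal is correct and follows essentially the paper's proof: a contour shift with the Young's-inequality lower bound on $\operatorname{Re}(\xi+\mathrm i s)^\alpha$ in the even case, and a smooth cutoff, Abel-regularized Watson argument and reduction to a fixed unit-scale profile in the non-even case (you just do the rescaling at the start instead of the end). One minor caution: for small $\alpha$ (e.g.\ $\alpha<1/2$) the remainder $r(\eta)\sim\eta^{2\alpha}/2$ is too singular at the origin for $\lceil\alpha\rceil+1$ integrations by parts, so use your second option, expanding $r$ in powers $\eta^{k\alpha}$ before bounding a sufficiently smooth final remainder, which works for every $\alpha$.
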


\begin{proof}
Suppose first that $\alpha$ is even.  For $y>0$, translate the
integration variable in Eq.~\eqref{eq:inverse_fourier_kernel} by
$\mathrm is$, where $s>0$.  Since the integrand is entire and
decays on the joining segments,
\[
    \abs{f_{\alpha,T}(y)}
    \leq
    \mathrm e^{-2\pi ys}
    \int_{\mathbb R}
    \exp\!\left[
       -T\norm{H}^{\alpha}
       \operatorname{Re}(\xi+\mathrm is)^\alpha
    \right]\mathrm d\xi.
\]
The elementary polynomial estimate
\[
    \operatorname{Re}(\xi+\mathrm is)^\alpha
    \geq
    c_\alpha\abs{\xi}^{\alpha}
    -
    C_\alpha s^\alpha
\]
follows from homogeneity and Young's inequality.  It shows that the
integral is at most
\[
    \frac{C_\alpha}{T^{\frac{1}{\alpha}}\norm{H}}
    \exp(C_\alpha T\norm{H}^{\alpha}s^\alpha).
\]
Choosing
\[
    s=c_\alpha
    \left[
    \frac{y}{T\norm{H}^{\alpha}}
    \right]^{\frac{1}{\alpha-1}}
\]
proves
Eq.~\eqref{eq:even_kernel_decay}; evenness handles $y<0$.

If $\alpha$ is not even, choose a smooth cutoff equal to one near the
origin.  On its support,
\[
    \mathrm e^{-T\norm{H}^{\alpha}\abs{\xi}^{\alpha}}
    =1-T\norm{H}^{\alpha}\abs{\xi}^{\alpha}
    +\Ocal(T^2\norm{H}^{2\alpha}\abs{\xi}^{2\alpha}).
\]
The cutoff constant has a rapidly decreasing Fourier transform.  The
Fourier transform of the complement is also rapidly decreasing after
repeated integration by parts, since it is smooth and all of its
derivatives are integrable.  The first non-smooth local term therefore
determines the leading asymptotic.  Abel regularization gives
\[
    \int_0^\infty
    \xi^\alpha\cos(2\pi y\xi)\,\mathrm d\xi
    =
    -\frac{\Gamma(\alpha+1)\sin(\pi\alpha/2)}
    {(2\pi\abs{y})^{\alpha+1}}
\]
for $y\ne0$.  The localized Fourier version of Watson's lemma shows
that the remainder is $o(\abs{y}^{-\alpha-1})$~\cite{wong2001asymptotic}.
This proves Eq.~\eqref{eq:noneven_kernel_asymptotic}.  Finally, the
change of variables
$\xi\mapsto\xi/[T^{\frac{1}{\alpha}}\norm{H}]$ reduces the kernel to a fixed
profile.  Its asymptotic is therefore a uniform upper bound once
$\abs{y}\geq C_\alpha T^{\frac{1}{\alpha}}\norm{H}$, which proves
Eq.~\eqref{eq:noneven_kernel_bound}.
\end{proof}

For $\abs{x}\leq1$ and $h^{-1}>1$, the alias term obeys
\begin{equation}
    \sum_{n\ne0}
    \mathrm e^{-T\norm{H}^{\alpha}\abs{x+n/h}^{\alpha}}
    \leq
    2\sum_{n=1}^{\infty}
    \mathrm e^{-T\norm{H}^{\alpha}(n/h-1)^\alpha}.
    \label{eq:alias_bound}
\end{equation}
For a positive decreasing function, comparison with its integral gives
\begin{align}
\sum_{n=1}^{\infty}
\mathrm e^{-T\norm{H}^{\alpha}(n/h-1)^\alpha}
\leq
\mathrm e^{-T\norm{H}^{\alpha}(1/h-1)^\alpha}
\nonumber+h\int_{1/h-1}^{\infty}
\mathrm e^{-T\norm{H}^{\alpha}y^\alpha}\,\mathrm dy.
\label{eq:alias_integral_comparison}
\end{align}
After the substitution
$y=T^{-\frac{1}{\alpha}}\norm{H}^{-1}z$, the last integral is an incomplete
gamma tail and is bounded by its first exponential times an
$\alpha$-dependent polynomial factor.  Absorbing that factor by
increasing the logarithm, the alias term and the sampled-kernel tail
are at most $\epsilon/[C\log(d+2)]$ when
\begin{equation}
    \frac{1}{h}
    \geq
    2+
    C_\alpha
    \left[
    \frac{1}{T\norm{H}^{\alpha}}
    \log\frac{C_\alpha\log(d+2)}{\epsilon}
    \right]^{\frac{1}{\alpha}}
    \label{eq:sample_spacing}
\end{equation}
and, for even integer $\alpha$,
\begin{equation}
    Nh
    \geq
    C_\alpha\norm{H}T^{\frac{1}{\alpha}}
    \left[
    \log\frac{C_\alpha\log(d+2)}{\epsilon}
    \right]^{1-\frac{1}{\alpha}}.
    \label{eq:even_frequency_radius}
\end{equation}
If $\alpha$ is not an even integer, it is sufficient to choose
\begin{equation}
    Nh
    \geq
    C_\alpha\norm{H}
    \left[
    \frac{T\log(d+2)}{\epsilon}
    \right]^{\frac{1}{\alpha}}.
    \label{eq:noneven_frequency_radius}
\end{equation}
These parameters are used only to compute the coefficients of $P_d$
classically.  Combining Eqs.~\eqref{eq:sample_spacing} and
\eqref{eq:even_frequency_radius} shows that an even power needs
\begin{equation}
N=\Ocal\!\left(
\norm{H}T^{\frac{1}{\alpha}}
\log^{1-\frac{1}{\alpha}}\frac{C_\alpha\log(d+2)}{\epsilon}
+\log\frac{C_\alpha\log(d+2)}{\epsilon}
\right)
\label{eq:even_classical_samples}
\end{equation}
kernel samples up to constant factors.  This quantity is a classical
preprocessing resource and is not the QSVT degree.
For a non-even power the same multiplication of the radius by
$h^{-1}$ gives
\begin{align}
N=\Ocal\!\Bigg(&
\norm{H}\left[
\frac{T}{\epsilon}
\log\frac{C_\alpha\log(d+2)}{\epsilon}
\right]^{\frac{1}{\alpha}}
+\left[
\frac{1}{\epsilon}
\log^2\frac{C_\alpha\log(d+2)}{\epsilon}
\right]^{\frac{1}{\alpha}}\Bigg).
\label{eq:noneven_classical_samples}
\end{align}

For shifted access, the classical sample counts follow from the same
two formulas after replacing $\alpha$ by $2\alpha$ and $\norm{H}$ by
$\sqrt{\norm{H}}$.  Thus Eq.~\eqref{eq:even_classical_samples}
applies to every integer $\alpha$, whereas
Eq.~\eqref{eq:noneven_classical_samples} applies only to noninteger
$\alpha$ after this replacement.  These counts remain classical
coefficient-generation costs and do not multiply the query bounds in
Eqs.~\eqref{eq:shifted_integer_degree} and
\eqref{eq:shifted_fractional_degree}.

\section{Proof of the Poisson-compiled polynomial bounds}
\label{app:poisson_compilation}

Let the degree-$d$ Chebyshev projection be denoted locally by
$S_d$.  Eq.~\eqref{eq:jacobi_anger_cosine} and linearity give
$P_d=S_d F_{N,h}$, where $F_{N,h}$ is the finite sum in
Eq.~\eqref{eq:cosine_poisson_sum}.  The projection is the ordinary
Fourier partial-sum operator after $x=\cos\theta$; hence its Lebesgue
constant obeys
\begin{equation}
    \norm{S_d}_{C[-1,1]\to C[-1,1]}
    \leq C\log(d+2)
    \label{eq:chebyshev_projection_norm}
\end{equation}
\cite{mason2003chebyshev,trefethen2019approximation}.  It follows that
\begin{align}
\norm{P_d-\mathrm e^{-T\norm{H}^{\alpha}\abs{x}^{\alpha}}}_\infty
\leq C\log(d+2)
\norm{F_{N,h}-\mathrm e^{-T\norm{H}^{\alpha}\abs{x}^{\alpha}}}_\infty
+
\norm{S_d\mathrm e^{-T\norm{H}^{\alpha}\abs{x}^{\alpha}}
-\mathrm e^{-T\norm{H}^{\alpha}\abs{x}^{\alpha}}}_\infty.
\label{eq:projection_error_split}
\end{align}
The parameter choices in
Eqs.~\eqref{eq:sample_spacing}--\eqref{eq:noneven_frequency_radius},
with $C_\alpha$ enlarged if necessary, make the first term at most
$\epsilon/4$.  We now bound the second term.

Suppose first that $\alpha$ is even.  The function
$\mathrm e^{-T\norm{H}^{\alpha}z^\alpha}$ is entire.  A point of the
Bernstein ellipse $E_{\mathrm e^s}$ has imaginary part at most
$\sinh s$.  For real $u,v$ and even $\alpha$,
$-\operatorname{Re}(u+\mathrm iv)^\alpha\leq C_\alpha\abs{v}^\alpha$;
this follows by dividing by $\abs{v}^\alpha$ and observing that the
negative part of the resulting polynomial is bounded.  For
$0<s\leq1$ this proves the first estimate below.  For $s\geq1$, the
elementary bound $\abs{z}\leq\mathrm e^s$ proves the second:
\begin{equation}
    \max_{z\in E_{\mathrm e^s}}
    \abs{\mathrm e^{-T\norm{H}^{\alpha}z^\alpha}}
    \leq
    \begin{cases}
    \exp(C_\alpha T\norm{H}^{\alpha}s^\alpha),
        &0<s\leq1,\\
    \exp(C_\alpha T\norm{H}^{\alpha}\mathrm e^{\alpha s}),
        &s\geq1.
    \end{cases}
    \label{eq:ellipse_maximum}
\end{equation}
The Bernstein-ellipse coefficient bound therefore makes the
degree-$d$ tail no larger than the corresponding line of
Eq.~\eqref{eq:ellipse_maximum} multiplied by
$4\mathrm e^{-ds}/(1-\mathrm e^{-s})$~\cite{trefethen2019approximation}.
Writing $\log(1/\epsilon)$ for the required exponential
suppression, choose $s$ proportional to
$[\log(1/\epsilon)/(T\norm{H}^{\alpha})]^{\frac{1}{\alpha}}$ when
this number is at most one.  The resulting sufficient degree is
\[
    d=\Ocal\!\left(
    \norm{H}T^{\frac{1}{\alpha}}
    \log^{1-\frac{1}{\alpha}}\frac{1}{\epsilon}
    \right),
\]
    including the additional logarithmic factor from
$(1-\mathrm e^{-s})^{-1}$.  When that choice of $s$ exceeds one, take
$s$ proportional to
$\log[1+(\log(1/\epsilon)/(T\norm{H}^{\alpha}))^{\frac{1}{\alpha}}]$.
The large-ellipse estimate then gives
\[
    d=
    \Ocal\!\left(
    \frac{\log(1/\epsilon)}{
    \log\!\left[
    1+
    \left(
    \frac{\log(1/\epsilon)}
    {T\norm{H}^{\alpha}}
    \right)^{\frac{1}{\alpha}}
    \right]}
    \right).
\]
Combining the two regimes gives the sharper joint estimate
\begin{equation}
\begin{split}
d=\Ocal\!\Bigg(
\norm{H}T^{\frac{1}{\alpha}}
\log^{1-\frac{1}{\alpha}}\frac{1}{\epsilon}
+\frac{\log(1/\epsilon)}{
\log\!\left[
1+
\left(
\frac{\log(1/\epsilon)}{T\norm{H}^{\alpha}}
\right)^{\frac{1}{\alpha}}
\right]}
\Bigg).
\end{split}
\label{eq:even_degree_interpolation}
\end{equation}
The denominator is understood under the high-precision convention at
the beginning of Sec.~\ref{sec:poisson_qsvt}.  Bounding it from below by one
proves the cleaner uniform form in Eq.~\eqref{eq:even_degree}.  At
fixed $T\norm{H}^{\alpha}$ it is
$\Theta(\log\log(1/\epsilon))$, which gives the stated
high-precision refinement.

Applying the same estimate with exponent $2\alpha$ and normalization
$\sqrt{\norm{H}}$, before the quadratic lifting, gives
\begin{equation}
\begin{split}
d=\Ocal\!\Bigg(
\sqrt{\norm{H}}T^{\frac{1}{2\alpha}}
\log^{1-\frac{1}{2\alpha}}\frac{1}{\epsilon}
+\frac{\log(1/\epsilon)}{
\log\!\left[
1+
\left(
\frac{\log(1/\epsilon)}{T\norm{H}^{\alpha}}
\right)^{\frac{1}{2\alpha}}
\right]}
\Bigg),
\end{split}
\label{eq:shifted_integer_degree_interpolation}
\end{equation}
which implies Eq.~\eqref{eq:shifted_integer_degree}.

If $\alpha$ is not even, the Chebyshev coefficients are obtained by
writing $x=\cos\theta$.  The only interior singularity occurs at
$\theta=\pi/2$, and
\[
    \mathrm e^{-T\norm{H}^{\alpha}
    \abs{\cos\theta}^{\alpha}}
    =
    1-T\norm{H}^{\alpha}
    \abs{\theta-\pi/2}^{\alpha}
    +o(\abs{\theta-\pi/2}^{\alpha}).
\]
Localizing this cosine-Fourier integral at $\pi/2$ and applying the
same Watson-lemma estimate as in Lemma~\ref{lem:kernel_decay} gives,
uniformly for
$j\geq C_\alpha\max\{1,\norm{H}T^{\frac{1}{\alpha}}\}$, the coefficient at index
$j$ bounded by
$C_\alpha T\norm{H}^{\alpha}j^{-\alpha-1}$.  The terms of higher order
in the local exponential form a convergent geometric majorant in this
range, while repeated integration by parts controls the complement.
This is also a standard consequence of the direct theorem for
algebraic singularities~\cite{wong2001asymptotic,
devore1993constructive,ditzian1987moduli}.  Summing the coefficient
tail gives
\begin{equation}
\norm{S_d\mathrm e^{-T\norm{H}^{\alpha}\abs{x}^{\alpha}}
-\mathrm e^{-T\norm{H}^{\alpha}\abs{x}^{\alpha}}}_\infty
\leq C_\alpha T\norm{H}^{\alpha}d^{-\alpha}.
\label{eq:noneven_chebyshev_tail}
\end{equation}
Choosing $d$ as in Eq.~\eqref{eq:noneven_degree} completes the
approximation estimate.

The preceding argument assumes exact kernel samples.  It is sufficient
to compute each $f_{\alpha,T}(kh)$ to additive error
$\epsilon/[C h(2N+1)\log(d+2)]$.  Indeed, the resulting finite-sum
error is at most $h(2N+1)$ times the sample error, and
Eq.~\eqref{eq:chebyshev_projection_norm} supplies the remaining
factor.  Rounding the final $d+1$ Chebyshev coefficients contributes
at most the sum of their absolute rounding errors.  Both are classical
precision requirements and do not alter the block-encoding query
count.

Finally, Eq.~\eqref{eq:poisson_polynomial_error} implies
$\abs{P_d(x)}\leq1+\epsilon/2$ on $[-1,1]$.
The rescaling in Eq.~\eqref{eq:qsvt_rescaling} is real, even, and
bounded by $1/2$.  The standard QSVT completion theorem therefore
produces a valid phase sequence of degree $d$~\cite{Gilyen2019qsvt}.
Writing
$g(x)=\mathrm e^{-T\norm{H}^{\alpha}\abs{x}^{\alpha}}$, the scalar
error satisfies
\begin{equation}
\left|
\frac{P_d(x)}{1+\epsilon/2}-g(x)
\right|
\leq
\frac{\epsilon/2+(\epsilon/2)g(x)}{1+\epsilon/2}
\leq\epsilon.
\label{eq:normalization_two_scalar_bound}
\end{equation}
The spectral theorem proves
Eq.~\eqref{eq:qsvt_normalization_two_error}.  This proves
Theorem~\ref{thm:poisson_compilation} and the $U_H$ bounds in
Theorem~\ref{thm:degree}.  Applying the same estimates with exponent
$2\alpha$ and using Proposition~\ref{prop:quadratic_lifting} proves
the $U_S$ bounds.

\section{End-to-end error propagation}
\label{app:end_to_end}

For nonzero vectors $v$ and $w$ with
$\norm{v-w}\leq\norm{w}/2$, the triangle inequality gives
\begin{equation}
    \left\|
    \frac{v}{\norm{v}}-\frac{w}{\norm{w}}
    \right\|
    \leq \frac{2\norm{v-w}}{\norm{w}}.
    \label{eq:normalization_lemma}
\end{equation}
On the normalized input $u_0/\norm{u_0}$, the exact output of the
normalization-two block is
$u_T/(2\norm{u_0})$ and has norm $1/(2u_r)$.  If the implemented block
produces $\widetilde u_T$, Eq.~\eqref{eq:normalization_lemma} gives
\begin{equation}
    \left\|
    \frac{\widetilde u_T}{\norm{\widetilde u_T}}
    -
    \frac{u_T}{\norm{u_T}}
    \right\|
    \leq
    4u_r
    \left\|
    \widetilde u_T-\frac{u_T}{2\norm{u_0}}
    \right\|.
    \label{eq:normalized_state_perturbation}
\end{equation}
Thus block error at most $\epsilon/(4u_r)$ is sufficient for
normalized-state error $\epsilon$.

The probability of observing the designated ancilla outcome is
$\Theta(1/u_r^2)$, provided the block error is at most a fixed fraction
of $1/u_r$.  Fixed-point amplitude amplification uses
$\Ocal(u_r)$ applications of the polynomial-transform block and its
inverse to obtain
constant success probability~\cite{Yoder2014fixedpoint}.  Substituting
$\epsilon/u_r$ for the polynomial accuracy in
Eqs.~\eqref{eq:even_degree} and
\eqref{eq:noneven_degree}, or in
Eqs.~\eqref{eq:shifted_integer_degree} and
\eqref{eq:shifted_fractional_degree}, proves
Eqs.~\eqref{eq:even_end_to_end} and
\eqref{eq:noneven_end_to_end}, or
Eqs.~\eqref{eq:shifted_integer_end_to_end} and
\eqref{eq:shifted_fractional_end_to_end}, respectively.

\section{Approximation lower bounds}
\label{app:optimality}

We record the scope of the approximation lower bounds in
Sec.~\ref{subsec:complexity}.

Fix an approximation error smaller than $(1-\mathrm e^{-1})/4$ and
assume $T\norm{H}^{\alpha}$ is larger than an
$\alpha$-dependent constant.  Compare the target at $x=0$ and at
$x=1/[T^{\frac{1}{\alpha}}\norm{H}]$.  Both points then lie in $[-1/2,1/2]$,
and the target values differ by $1-\mathrm e^{-1}$.  Any uniformly
approximating polynomial is bounded by a universal constant and, by
the mean-value theorem, satisfies
$\abs{P'(\xi)}=\Omega(\norm{H}T^{\frac{1}{\alpha}})$ at an intermediate
point.
Bernstein's inequality in the interior,
\begin{equation}
    \abs{P'(\xi)}
    \leq
    \frac{d}{\sqrt{1-\xi^2}}
    \norm{P}_\infty,
    \label{eq:bernstein_derivative}
\end{equation}
therefore gives
\begin{equation}
    d=\Omega(\norm{H}T^{\frac{1}{\alpha}}).
    \label{eq:physical_lower_bound}
\end{equation}
Thus Eq.~\eqref{eq:physical_lower_bound} is an asymptotic lower bound
as $T\norm{H}^{\alpha}\to\infty$.  It matches the physical-scale
dependence of both
Eqs.~\eqref{eq:even_degree} and
\eqref{eq:noneven_degree}.

For even integer $\alpha$, the target is an entire function of order
$\alpha$.  Bernstein's converse theorem for polynomial approximation
of entire functions states that, for fixed $T$ and $H$, the logarithm
of the reciprocal best degree-$d$ error is
$\Theta(d\log d)$.  Hence
$d=\Theta(
\log(1/\epsilon)/\log\log(1/\epsilon))$~\cite{bernstein1912ordre,
timan1963theory}.  For $\alpha=2$, the complete
two-parameter transition, including the logarithmic denominator in
    Eq.~\eqref{eq:even_degree_interpolation}, follows from a known
    optimal-degree theorem~\cite{aggarwal2022optimaldegree}.  For general even powers,
the displayed upper bound, the constant-error lower bound, and the
fixed-$T,H$ high-precision lower bound are the claims used in this
paper; no assertion about sharp universal constants is required.

If $\alpha$ is not even, the even extension used by the direct
construction has the nonzero local term
$-T\norm{H}^{\alpha}\abs{x}^{\alpha}$.  Bernstein's classical
theorem for approximation of $\abs{x}^{\alpha}$, together with the
direct and converse theorems for algebraic singularities, gives best
degree-$d$ error proportional to $d^{-\alpha}$ for this target at
fixed $T$ and $H$~\cite{bernstein1914sur,devore1993constructive,
ditzian1987moduli}.  Therefore
$d=\Theta(\epsilon^{-\frac{1}{\alpha}})$ as
$\epsilon\to0$ within the even-parity class.  This class contains every
gap-independent transformation made by one ordinary QSVT sequence:
the QSVT polynomial has definite parity, and the odd case is excluded
because its value at zero is zero while the target value is one.  This
proves the fixed-scale precision statement in
Table~\ref{tab:degree_regimes}.  Generalized eigenvalue transformations
of indefinite parity are a different constrained-approximation problem
and are not included in this converse statement.  The uniform upper
bound remains Eq.~\eqref{eq:noneven_degree}.

The shifted model has an exact approximation-theoretic reduction.
For every degree $d$,
\begin{align}
\inf_{\deg Q\leq d}
\max_{s\in[-1,1]}
\left|
Q(s)-\mathrm e^{-T\norm{H}^{\alpha}[(1+s)/2]^\alpha}
\right|
=
\inf_{\substack{\deg P\leq2d\\P(-y)=P(y)}}
\max_{y\in[-1,1]}
\left|
P(y)-\mathrm e^{-T\norm{H}^{\alpha}\abs{y}^{2\alpha}}
\right|.
\label{eq:shifted_best_approximation_identity}
\end{align}
Indeed, $Q(s)\mapsto Q(2y^2-1)$ maps degree-$d$ polynomials
bijectively to even polynomials of degree at most $2d$, with inverse
$P(y)\mapsto P(\sqrt{(1+s)/2})$.  Since the target on the right is
even, symmetrizing any approximant shows that the parity restriction
does not change its best uniform error.

If $\alpha$ is a positive integer, the right-hand target in
Eq.~\eqref{eq:shifted_best_approximation_identity} is entire.  The
entire-function converse theorem used above therefore gives the
fixed-scale order
$\Theta(\log(1/\epsilon)/\log\log(1/\epsilon))$ for the
shifted variable as well.  For the physical lower bound, compare the
shifted target at $s=-1$ and at
\begin{equation}
    s=-1+\frac{2}{T^{\frac{1}{\alpha}}\norm{H}}.
\end{equation}
Their values differ by $1-\mathrm e^{-1}$.  The mean-value theorem
requires a derivative of order
$T^{\frac{1}{\alpha}}\norm{H}$, while Markov's inequality gives
$\norm{Q'}_\infty\leq d^2\norm{Q}_\infty$.  Hence
\begin{equation}
    d
    =
    \Omega\!\left(
    (T\norm{H}^{\alpha})^{\frac{1}{2\alpha}}
    \right),
    \label{eq:shifted_physical_lower_bound}
\end{equation}
matching Eq.~\eqref{eq:shifted_integer_degree} at fixed error.

If $\alpha$ is noninteger, the exponent $2\alpha$ is not an even
integer.  Applying the direct and converse theorem for the algebraic
singularity $\abs{y}^{2\alpha}$ to the right-hand side of
Eq.~\eqref{eq:shifted_best_approximation_identity} gives best error
\begin{equation}
    \Theta(d^{-2\alpha}).
    \label{eq:shifted_fractional_best_error}
\end{equation}
Thus the fixed-scale degree is
$\Theta(\epsilon^{-\frac{1}{2\alpha}})$.  The same endpoint
argument leading to Eq.~\eqref{eq:shifted_physical_lower_bound}
applies at fixed error, so
Eq.~\eqref{eq:shifted_fractional_degree} has the optimal physical
scale as well~\cite{devore1993constructive,ditzian1987moduli,
timan1963theory}.

\section{Proof of the Weyl--Poisson identity}
\label{app:weyl_poisson}

We work in finite dimension and set
\begin{equation}
    F(k):=q(k)\mathrm e^{-\mathrm iT(kL+G)},
    \qquad
    \widehat F(\xi):=
    \int_{\mathbb R}F(k)\mathrm e^{-2\pi\mathrm ik\xi}\,\mathrm dk,
    \label{eq:operator_fourier_transform}
\end{equation}
where the Fourier transform is a norm-convergent Bochner integral.  For
real $k$, the operator $kL+G$ is Hermitian.  Duhamel's formula gives
\begin{equation}
\frac{\mathrm d}{\mathrm dk}
\mathrm e^{-\mathrm iT(kL+G)}
=-\mathrm iT\int_0^1
\mathrm e^{-\mathrm iT(1-s)(kL+G)}L
\mathrm e^{-\mathrm iTs(kL+G)}\,\mathrm ds,
\label{eq:weyl_duhamel_derivative}
\end{equation}
so its derivative has norm at most $T\norm{L}$.  Repeated Duhamel
formulas bound every higher derivative uniformly on the real axis.
The Gaussian factor in $q$ therefore makes $F$ a matrix-valued Schwartz
function.  Consequently both lattice sums below converge absolutely in
operator norm, and the usual Poisson argument applies without a
distributional interpretation.

For completeness, the $\ell$th Fourier coefficient of the
$h$-periodization of $F$ is
\begin{align}
\frac{1}{h}\int_0^h
\sum_{m\in\mathbb Z}F(x+mh)
\mathrm e^{-\frac{2\pi\mathrm i\ell x}{h}}\,\mathrm dx
=
\frac{1}{h}\int_{\mathbb R}F(k)
\mathrm e^{-\frac{2\pi\mathrm i\ell k}{h}}\,\mathrm dk
=\frac{1}{h}\widehat F(\ell/h).
\label{eq:operator_periodization_coefficient}
\end{align}
The first equality follows by unfolding the periodized integral; the
phase is unchanged because $\ell m$ is an integer.  Since the Fourier
series is absolutely convergent, evaluating it at zero and multiplying
by $h$ gives
\begin{equation}
    h\sum_{m\in\mathbb Z}F(mh)
    =
    \sum_{\ell\in\mathbb Z}\widehat F(\ell/h).
    \label{eq:operator_poisson_full}
\end{equation}
For the Fourier--Stieltjes Weyl calculus restricted to measures on the
line $(k,1)$~\cite{weyl1950theory,anderson1969weyl,
jefferies2004spectral,ni2026quantumeigenvaluetransformationlinear},
Eq.~\eqref{eq:weyl_fourier_class} shows directly that
\begin{equation}
\widehat F(\ell/h)
=
\operatorname{Op}^{\mathrm W}_{TL,TG}
\!\left[
\mathrm e^{-\mathrm iy}
\phi\!\left(x+\frac{2\pi\ell}{h}\right)
\right].
\label{eq:weyl_mode_identification}
\end{equation}
Substitution into Eq.~\eqref{eq:operator_poisson_full} and separation of
the zero mode prove Eq.~\eqref{eq:weyl_poisson_full}.  The zero mode is
the continuous LCHS integral.  Removing the lattice terms
$\abs{m}>N$ and subtracting the target proves
Eq.~\eqref{eq:weyl_poisson_exact}.  At no point is
$\mathrm e^{-\mathrm iT(kL+G)}$ replaced by a product of exponentials,
so the proof does not require $[L,G]=0$.

Absolute convergence establishes the identity.  To bound the alias sum
quantitatively, one may additionally use strip analyticity.
Suppose $F$ is analytic for $\abs{\operatorname{Im}z}<b$, tends to zero
uniformly on narrower horizontal strips, and its boundary norms are
integrable.  The standard rectangular-contour proof of the
trapezoidal rule, valid for Banach-valued analytic functions, gives
\begin{align}
\left\|
h\sum_{m\in\mathbb Z}F(mh)-\int_{\mathbb R}F(k)\,\mathrm dk
\right\|\leq
\frac{
\displaystyle\int_{\mathbb R}\norm{F(x+\mathrm ib)}\,\mathrm dx+
\displaystyle\int_{\mathbb R}\norm{F(x-\mathrm ib)}\,\mathrm dx}
{\mathrm e^{\frac{2\pi b}{h}}-1}.
\label{eq:operator_trapezoidal_bound}
\end{align}
This is the operator form of the exponentially convergent
trapezoidal estimate~\cite{trefethen2014exponentially}.

For the kernel in Eq.~\eqref{eq:optimal_lchs_kernel}, the Gaussian
factor supplies real-axis decay, while the poles at $k=\pm\mathrm i$
leave $F$ analytic in every strip $\abs{\operatorname{Im}k}<b<1$.
Since the Hermitian part of
$-\mathrm i[(x+\mathrm iy)L+G]$ is $yL$, the
logarithmic-norm bound gives
\begin{equation}
\norm{\mathrm e^{-\mathrm iT[(x+\mathrm iy)L+G]}}
\leq \mathrm e^{T\abs{y}\norm{L}}.
\label{eq:complex_hamiltonian_bound}
\end{equation}
Taking any fixed $b<1$ in
Eq.~\eqref{eq:operator_trapezoidal_bound} recovers exponential mesh
convergence.  The sharper parameter choice established for the
optimal kernel~\cite{low2025optimalquantumsimulationlinear} gives the
spacing and uniform node count stated in
Eq.~\eqref{eq:uniform_lchs_nodes}.

The frequency-tail term always satisfies
\begin{equation}
    \left\|
    h\sum_{\abs{m}>N}
    q(mh)\mathrm e^{-\mathrm iT(mhL+G)}
    \right\|
    \leq
    h\sum_{\abs{m}>N}\abs{q(mh)},
    \label{eq:weyl_tail_bound}
\end{equation}
because each exponential is unitary.  Bounds for the kernel and alias
terms depend on the selected LCHS kernel.  This is why
Eq.~\eqref{eq:weyl_poisson_exact} separates the three contributions
rather than replacing them by one scalar quadrature error.

It remains to justify the transformed rule.  Set
\begin{equation}
    F_{\sinh}(z):=
    \cosh z\,q(\sinh z)
    \mathrm e^{-\mathrm iT(\sinh zL+G)}.
    \label{eq:sinh_operator_integrand}
\end{equation}
Applying Eq.~\eqref{eq:operator_poisson_full} to $F_{\sinh}$ gives
Eq.~\eqref{eq:sinh_weyl_poisson}.  Its zero Fourier mode is
\begin{equation}
    \int_{\mathbb R}F_{\sinh}(x)\,\mathrm dx
    =
    \int_{\mathbb R}q(k)
    \mathrm e^{-\mathrm iT(kL+G)}\,\mathrm dk,
    \label{eq:sinh_zero_mode}
\end{equation}
where $k=\sinh x$.

More generally, the change of variables $k=\sinh z$ writes the
$\ell$th Fourier mode on the right-hand side of
Eq.~\eqref{eq:sinh_weyl_poisson} exactly as
\begin{align}
\int_{\mathbb R}
\cosh z\,q(\sinh z)
\mathrm e^{-\mathrm iT(\sinh zL+G)}
\mathrm e^{-\frac{2\pi\mathrm i\ell z}{h}}\,\mathrm dz
=
\operatorname{Op}^{\mathrm W}_{TL,TG}\!\left[
\mathrm e^{-\mathrm iy}
\int_{\mathbb R}q(k)\mathrm e^{-\mathrm ikx}
\mathrm e^{-\frac{2\pi\mathrm i\ell}{h}
\operatorname{arsinh}(k)}\,\mathrm dk
\right].
\label{eq:sinh_weyl_mode}
\end{align}
Thus the sinh--sinh map preserves both the Weyl lift and the complete
Poisson alias decomposition; the case $\ell=0$ reduces to
Eq.~\eqref{eq:sinh_zero_mode}.

The transformed kernel has the analytic expression
\begin{equation}
    \cosh z\,q(\sinh z)
    =
    \frac{\mathrm e^{c-\mathrm i c\sinh z}}{\pi\cosh z}
    \exp\!\left[-\frac{\cosh^2z}{4\gamma^2}\right].
    \label{eq:sinh_kernel_analytic}
\end{equation}
It is analytic for $\abs{\operatorname{Im}z}<\pi/2$.  Let
$0<b\leq\pi/8$.  Direct calculation gives
\begin{equation}
\begin{split}
    \abs{\cosh(x\mathbin{\pm}\mathrm ib)}
    &\geq \cos b\cosh x,\\
    \abs{\operatorname{Im}\sinh(x\mathbin{\pm}\mathrm ib)}
    &=\sin b\cosh x,\\
    \operatorname{Re}\cosh^2(x\mathbin{\pm}\mathrm ib)
    &=\cos(2b)\cosh^2x+\sin^2b.
\end{split}
\label{eq:sinh_gaussian_strip}
\end{equation}
Together with Eq.~\eqref{eq:complex_hamiltonian_bound}, these identities
imply
\begin{align}
\norm{F_{\sinh}(x\mathbin{\pm}\mathrm ib)}
\leq
\frac{\mathrm e^c}{\pi\cos b\cosh x}
\exp\!\left[
-\frac{\cos(2b)}{4\gamma^2}\cosh^2x
+(c+T\norm{L})\sin b\cosh x
\right].
\label{eq:sinh_boundary_pointwise}
\end{align}
For positive $a$ and real $v$,
$-ay^2+vy\leq-ay^2/2+v^2/(2a)$.  Applying this inequality with
$y=\cosh x$, and then using
$\int_{\mathbb R}(\cosh x)^{-1}\mathrm dx=\pi$, yields the explicit
boundary estimate
\begin{equation}
\int_{\mathbb R}
\norm{F_{\sinh}(x\mathbin{\pm}\mathrm ib)}\,\mathrm dx
\leq
\frac{\mathrm e^c}{\cos b}
\exp\!\left[
\frac{2\gamma^2(c+T\norm{L})^2\sin^2b}{\cos(2b)}
\right].
\label{eq:sinh_boundary_growth}
\end{equation}

Since $L=(A+A^\dagger)/2$, the access model gives
$\norm{L}\leq\norm{A}\leq\beta_A$.  Choose
\begin{equation}
    b=\min\!\left\{
        \frac{\pi}{8},\frac{1}{2(c+T\beta_A)}
    \right\}.
    \label{eq:sinh_strip_choice}
\end{equation}
Then $(c+T\norm{L})\sin b\leq1/2$ and
$\cos(2b)\geq1/\sqrt2$, so each boundary integral in
Eq.~\eqref{eq:sinh_boundary_growth} is at most
$\mathrm e^{c+\gamma^2}/\cos(\pi/8)$.  It follows directly from
Eq.~\eqref{eq:operator_trapezoidal_bound} that mesh error at most
$\epsilon$ is guaranteed whenever
\begin{equation}
    \frac{2\pi b}{h}
    \geq
    \log\!\left[
        1+\frac{2\mathrm e^{c+\gamma^2}}
        {\epsilon\cos(\pi/8)}
    \right].
    \label{eq:sinh_mesh_condition}
\end{equation}
Since $c$ is fixed and
$\gamma^2=\Ocal(\log(1/\epsilon))$, the physical-scale contribution to
the inverse mesh spacing is proportional to $T\beta_A$.  For fixed
positive $T\beta_A$, or as this scale grows, the resource convention of
Sec.~\ref{sec:poisson_qsvt} therefore gives
\begin{equation}
    h^{-1}
    =
    \Ocal\!\left(
        T\beta_A\log\frac{1}{\epsilon}
    \right).
    \label{eq:sinh_mesh_spacing}
\end{equation}
This is the step at which the block normalization enters the branch
count.  The displayed simplification is not a joint asymptotic formula
as $T\beta_A\to0$; in that limit the additive precision term in
Eq.~\eqref{eq:sinh_mesh_condition} must be retained.

For the lattice tail, Eq.~\eqref{eq:sinh_optimal_kernel_decay} is even
and decreasing on the positive real axis.  Integral comparison and
$u=\sinh x$ therefore give
\begin{align}
    h\sum_{\abs{m}>N}
    \cosh(mh)\abs{q(\sinh(mh))}
    \quad\leq
    \frac{4\mathrm e^c\gamma^2}{\pi\sinh(Nh)}
    \exp\!\left[-\frac{\sinh^2(Nh)}{4\gamma^2}\right].
    \label{eq:sinh_lattice_tail}
\end{align}
Indeed, after the substitution it is enough to use
$\int_a^\infty\mathrm e^{-\frac{u^2}{4\gamma^2}}\,\mathrm du
\leq(2\gamma^2/a)\mathrm e^{-\frac{a^2}{4\gamma^2}}$.
Choose
\begin{equation}
    N=
    \left\lceil
        \frac{\operatorname{arsinh}(R)}{h}
    \right\rceil,
    \qquad R=2c\gamma^2.
    \label{eq:sinh_cutoff_choice}
\end{equation}
Then Eq.~\eqref{eq:sinh_lattice_tail} is $\Ocal(\epsilon)$ after the
constant error allocation described below
Eq.~\eqref{eq:weyl_poisson_exact}.  Also,
$R=\Ocal(\log(1/\epsilon))$ and
$\operatorname{arsinh}(R)=
\Ocal(\log\log(1/\epsilon))$.  Taking $h\leq1$
also gives
$\sinh(Nh)=\Ocal(\log(1/\epsilon))$.  Suppressing the additive constant
number of nodes according to the same resource convention gives
\begin{equation}
    2N+1
    =\Ocal\!\left(
        T\beta_A
        \log\frac{1}{\epsilon}
        \log\log\frac{1}{\epsilon}
    \right),
    \label{eq:sinh_node_count_proof}
\end{equation}
which proves Eq.~\eqref{eq:sinh_lchs_nodes_radius}, including its
$\beta_A$ dependence.

It remains to distinguish this node count from the LCU coefficient
one-norm.  The function
$\cosh x\abs{q(\sinh x)}$ is even and decreasing for $x\geq0$ by
Eq.~\eqref{eq:sinh_optimal_kernel_decay}.  Monotone integral comparison
therefore gives, for $h\leq1$,
\begin{align}
    h\sum_{m=-N}^{N}
    \cosh(mh)\abs{q(\sinh(mh))}
    \leq
    h\sum_{m\in\mathbb Z}
    \cosh(mh)\abs{q(\sinh(mh))}
    \leq
    \int_{\mathbb R}\abs{q(k)}\,\mathrm dk
    +h\abs{q(0)}
    \leq
    \mathrm e^c\left(1+\frac{1}{\pi}\right).
    \label{eq:sinh_discrete_normalization}
\end{align}
Here the substitution $k=\sinh x$ gives the integral in the second
line, Eq.~\eqref{eq:optimal_lchs_normalization} bounds it by
$\mathrm e^c$, and $\abs{q(0)}\leq\mathrm e^c/\pi$.  This proves
Eq.~\eqref{eq:sinh_lchs_normalization_bound}.  The bound contains
neither $T$ nor $\beta_A$: the normalization is constant even though
the number of lattice points is not.

\section{APS factorizations and query complexity}
\label{app:aps}

\subsection{Half-root access and controlled family}

Suppose first that an exact block encoding of
\begin{equation}
    R=\left(\frac{H}{\beta_H}\right)^{\frac{1}{2\alpha}}
    \label{eq:aps_half_root}
\end{equation}
is available.  The Chebyshev polynomial $2x^2-1$ has even parity and
modulus at most one on $[-1,1]$, so a degree-two QSVT sequence implements
it exactly~\cite{Gilyen2019qsvt}.  Since $R$ is positive semidefinite, its singular-value and
eigenvalue transformations coincide, and the designated block is
\begin{equation}
    2R^2-I
    =2\left(\frac{H}{\beta_H}\right)^{\frac{1}{\alpha}}-I.
    \label{eq:aps_half_root_to_shift}
\end{equation}
This proves the relation to the root-shift signal stated in
Eq.~\eqref{eq:aps_root_signal}.

We next verify that all times required by APS can be placed in one
coherent circuit.  For every $t_m\in[0,T]$, let $P_m$ be a polynomial
for $\mathrm e^{-t_m\beta_H\abs{x}^{2\alpha}}$.  Compute the unscaled
Poisson polynomial to uniform error $\delta/4$ and round its coefficients
so that the induced uniform error is at most $\delta/4$.  Dividing the
result by $1+\delta/2$ gives an even polynomial satisfying
\begin{equation}
    \max_{x\in[-1,1]}\abs{P_m(x)}\leq1,
    \qquad
    \left\|
    P_m\!\left((H/\beta_H)^{\frac{1}{2\alpha}}\right)
    -\mathrm e^{-t_mH}
    \right\|\leq\delta.
    \label{eq:aps_normalization_one_family}
\end{equation}
The second bound is Eq.~\eqref{eq:normalization_two_scalar_bound} with
$\epsilon$ replaced by $\delta$, but without the additional fixed factor
$1/2$: before rescaling, the rounded polynomial is within $\delta/2$ of
the target and has modulus at most $1+\delta/2$.
Thus the real-polynomial QSVT theorem for fixed parity implements $P_m$
as a normalization-one approximate block encoding
~\cite{Gilyen2019qsvt}, rather than the fixed normalization-two block
used in Theorem~\ref{thm:poisson_compilation}.  Choose a common even
query length no smaller than the maximum required at $T$.  Shorter
sequences may be padded by cancelling pairs of signal queries.  A QSVT
sequence is an
alternating product of the signal unitary and phase rotations.  At the
$j$th rotation, replace the phase gate by the direct sum of the phase
gates for all $m$, controlled by the time register.  The signal unitary
itself is independent of $m$.
Consequently the full circuit is block diagonal in that register, and
its designated block obeys
\begin{align}
\left\|
\sum_m\ket m\!\bra m\otimes
\left[
P_m\!\left(
\left(\frac{H}{\beta_H}\right)^{\frac{1}{2\alpha}}
\right)
-\mathrm e^{-t_mH}
\right]
\right\|
=\max_m\left\|
P_m\!\left(
\left(\frac{H}{\beta_H}\right)^{\frac{1}{2\alpha}}
\right)
-\mathrm e^{-t_mH}
\right\|
\leq\delta.
\label{eq:aps_controlled_family_error}
\end{align}
This gives the Poisson-compiled controlled dissipative-family oracle
used below.  It requires the maximum polynomial degree,
not the sum of the degrees over the time grid~\cite{Gilyen2019qsvt}.
This is a signal-query statement.  The gate cost of the multiplexed
phase rotations additionally depends on how the time register is
accessed and is not included in the block-encoding query count.

\subsection{Exact factorizations}

Let
$U(T)=\mathrm e^{-T(H+\mathrm iG)}$.
Writing $U(T)=\mathrm e^{-\mathrm iTG}V(T)$ and differentiating gives
\begin{equation}
    V'(T)
    =
    -\mathrm e^{\mathrm iTG}
    H
    \mathrm e^{-\mathrm iTG}V(T),
    \qquad V(0)=I.
    \label{eq:phase_picture_ode}
\end{equation}
The time-ordered solution is
Eq.~\eqref{eq:phase_driven_aps}.
Alternatively, writing
$U(T)=\mathrm e^{-T H}W(T)$ gives
\begin{equation}
    W'(T)
    =
    -\mathrm i\,
    \mathrm e^{T H}
    G
    \mathrm e^{-T H}W(T),
    \qquad W(0)=I,
    \label{eq:amplitude_picture_ode}
\end{equation}
which proves Eq.~\eqref{eq:amplitude_driven_aps}.  These identities are
exact.  Algorithmic use of either one additionally requires the
controlled access and time-ordering assumptions specified by the APS
implementation.

\subsection{Dissipative Dyson products}

The amplitude-driven factorization can be used without separately
implementing the possibly non-Hermitian similarity transform
$\mathrm e^{sH}G\mathrm e^{-sH}$.  Iterating
Eq.~\eqref{eq:amplitude_picture_ode} and multiplying by
$\mathrm e^{-TH}$ from the left gives the exact series
\begin{align}
\mathrm e^{-T(H+\mathrm iG)}
=\sum_{k=0}^{\infty}(-\mathrm i)^k
\int_{0\leq s_1\leq\cdots\leq s_k\leq T}
\mathrm e^{-H(T-s_k)}G
\times
\mathrm e^{-H(s_k-s_{k-1})}G\cdots
G\mathrm e^{-Hs_1}\,
\mathrm ds_1\cdots\mathrm ds_k .
\label{eq:aps_dissipative_dyson}
\end{align}
For $k=0$ the integrand is $\mathrm e^{-TH}$, while for $k=1$ the
product is
$\mathrm e^{-H(T-s_1)}G\mathrm e^{-Hs_1}$.  Every dissipative
duration in a term of Eq.~\eqref{eq:aps_dissipative_dyson} is
nonnegative, and the durations sum to $T$.  Since
$H\succeq0$, all of these factors are contractions.  The norm of the
$k$th term is consequently at most
\begin{equation}
    \frac{(\beta_GT)^k}{k!}.
    \label{eq:aps_dyson_term_bound}
\end{equation}

The APS implementation divides the evolution into
$\Ocal(\beta_GT)$ segments and
truncates the local Dyson series at
\begin{equation}
    \Ocal\!\left(
    \frac{\log(u_r/\epsilon)}
    {\log\log(u_r/\epsilon)}
    \right)
    \label{eq:aps_local_dyson_order}
\end{equation}
terms~\cite{hu2026quantumsimulationnonunitarydynamics}.  Thus the total
number of alternating dissipative and $G$ blocks is
$\Ocal(\beta_G T\log(u_r/\epsilon)/\log\log(u_r/\epsilon))$.  After the ordered integrals are
discretized, every occurrence of
$\mathrm e^{-H(s_j-s_{j-1})}$ is selected from
Eq.~\eqref{eq:aps_controlled_family}; Eq.~\eqref{eq:aps_controlled_family_error}
therefore supplies all dissipative blocks in the APS circuit with one
uniform construction.  The Poisson-compiled polynomial thereby
provides the controlled dissipative-family realization used below.

\subsection{Error and query allocation}

We now justify Corollary~\ref{cor:aps_complexity}.  The APS circuit makes
$\Ocal(\beta_G T\log(u_r/\epsilon)/\log\log(u_r/\epsilon))$
controlled dissipative-family calls per run.  Telescoping over those
calls and the subsequent amplitude-amplification steps shows that it is
sufficient to choose the individual block error as
\begin{equation}
    \Theta\!\left(
    \frac{\epsilon}{u_r\beta_G T}
    \frac{\log\log(u_r/\epsilon)}
    {\log(u_r/\epsilon)}
    \right)
    \label{eq:aps_individual_accuracy}
\end{equation}
and to allocate the same order of total error to the truncated Dyson
construction.  Eq.~\eqref{eq:normalization_lemma} then controls
normalization of the approximate output.

Every time in Eq.~\eqref{eq:aps_controlled_family} lies in $[0,T]$.
The uniform degree bounds in Theorem~\ref{thm:degree} are monotone when
each selected time is upper-bounded by $T$.  Thus all
controlled branches may be padded to the maximum degree required at
$T$.  The multiplexed construction in
Eq.~\eqref{eq:aps_controlled_family_error} supplies their coherent
programmability.
To apply Theorem~\ref{thm:degree}, regard $H$ as
$(H^{\frac{1}{2\alpha}})^{2\alpha}$.  The half-root oracle in
Eq.~\eqref{eq:aps_half_root_access} supplies the standard positive signal
for $H^{\frac{1}{2\alpha}}$ with normalization
$\beta_H^{\frac{1}{2\alpha}}$.  Substituting this normalization, the
even exponent $2\alpha$, and the polynomial tolerance in
Eq.~\eqref{eq:aps_individual_accuracy} into Eq.~\eqref{eq:even_degree}
gives
\begin{equation}
\Ocal\!\left(
\left(
T\beta_H+\log\frac{u_r\beta_G T}{\epsilon}
\right)^{\frac{1}{2\alpha}}
\log^{1-\frac{1}{2\alpha}}\!\frac{u_r\beta_G T}{\epsilon}
\right).
\label{eq:aps_dissipative_resource}
\end{equation}
This is the number of half-root queries executed by one invocation of
the controlled dissipative family at the required uniform accuracy.  It
is Eq.~\eqref{eq:aps_family_degree}.  For nonnegative $a,b$ and
$0<p\leq1$,
concavity gives
$(a+b)^p\leq a^p+b^p\leq2^{1-p}(a+b)^p$.
Consequently Eq.~\eqref{eq:aps_dissipative_resource} is, up to a
constant depending only on $\alpha$, the same uniform bound as the
physical-scale term plus $\log(u_r\beta_G T/\epsilon)$.  If
$T\beta_H\geq\log(u_r\beta_G T/\epsilon)$, its first factor is bounded
by a constant times
$(T\beta_H)^{\frac{1}{2\alpha}}
\log^{1-\frac{1}{2\alpha}}(u_r\beta_G T/\epsilon)$.
For $\alpha=1$, this is precisely the dissipative term in
Eq.~\eqref{eq:aps_square_root_reduction}.

The extra factor
$\log(u_r/\epsilon)/\log\log(u_r/\epsilon)$ inside the reciprocal
individual accuracy changes its logarithm only by a lower-order
iterated-log term, which is absorbed in the displayed order.  The
number of controlled-family calls per run and the number of calls to
$G/\beta_G$ have the APS order
$\Ocal(\beta_G T\log(u_r/\epsilon)/\log\log(u_r/\epsilon))$
~\cite{hu2026quantumsimulationnonunitarydynamics}.  Fixed-point
amplitude amplification multiplies both counts by $u_r$
~\cite{Yoder2014fixedpoint}, which gives
Eq.~\eqref{eq:aps_family_calls} and
Eq.~\eqref{eq:aps_flat_g_queries}.

If Eq.~\eqref{eq:aps_controlled_family} is supplied directly,
Eq.~\eqref{eq:aps_family_calls} is its query count and
Eq.~\eqref{eq:aps_family_degree} specifies one possible realization.
When the family is implemented by that QSVT circuit, each invocation
executes the half-root sequence.  Substituting
Eq.~\eqref{eq:aps_dissipative_resource} into the family-call count of
the amplitude-driven APS framework~\cite{hu2026quantumsimulationnonunitarydynamics} gives
Eq.~\eqref{eq:aps_flat_half_root_queries}.  The same procedure
uses $\Ocal(u_r)$ calls to the initial-state oracle.  This proves
Corollary~\ref{cor:aps_complexity}.
The sharper interpolation obtained from
Eq.~\eqref{eq:even_degree_interpolation} after replacing the exponent by
$2\alpha$ and the normalization by $\beta_H^{\frac{1}{2\alpha}}$
may be substituted in this resource bound; equivalently, it is
Eq.~\eqref{eq:shifted_integer_degree_interpolation}.  At fixed
$T\beta_H$ and $\beta_G T$ it
improves the per-invocation high-precision behavior to
$\log(u_r\beta_G T/\epsilon)/\log\log(u_r\beta_G T/\epsilon)$, as stated
after the corollary.

\end{document}